\documentclass[11pt,letterpaper,onecolumn]{article}

\usepackage[T1]{fontenc}
\usepackage[utf8]{inputenc}
\usepackage{lmodern}
\usepackage{microtype}
\usepackage[margin=1in]{geometry}
\usepackage{setspace}
\usepackage{authblk}
\usepackage{mathtools}
\usepackage{amssymb}
\usepackage{amsthm}
\usepackage{aliascnt}
\usepackage{booktabs}
\usepackage{graphicx}

\usepackage{xcolor}
\usepackage[
  backend=biber,
  style=numeric-comp,
  sorting=nyt,
  sortcites=true,
  giveninits=true,
  maxbibnames=99,
  doi=true,
  url=true,
  eprint=true,
  isbn=false
]{biblatex}
\usepackage[hidelinks]{hyperref}
\usepackage[capitalise,nameinlink,noabbrev]{cleveref}

\theoremstyle{plain}
\newtheorem{theorem}{Theorem}[section]
\newaliascnt{lemma}{theorem}
\newtheorem{lemma}[lemma]{Lemma}
\aliascntresetthe{lemma}
\newaliascnt{proposition}{theorem}
\newtheorem{proposition}[proposition]{Proposition}
\aliascntresetthe{proposition}
\newaliascnt{corollary}{theorem}
\newtheorem{corollary}[corollary]{Corollary}
\aliascntresetthe{corollary}
\newaliascnt{claim}{theorem}

\aliascntresetthe{claim}
\newaliascnt{conjecture}{theorem}
\newtheorem{conjecture}[conjecture]{Conjecture}
\aliascntresetthe{conjecture}

\theoremstyle{definition}
\newaliascnt{definition}{theorem}

\aliascntresetthe{definition}

\theoremstyle{remark}
\newaliascnt{remark}{theorem}
\newtheorem{remark}[remark]{Remark}
\aliascntresetthe{remark}
\newaliascnt{observation}{theorem}

\aliascntresetthe{observation}

\usepackage{thmtools}

\crefname{appendix}{appendix}{appendices}
\Crefname{appendix}{Appendix}{Appendices}

\newif\ifshowcomments
\showcommentsfalse

\definecolor{AineshComment}{HTML}{F05A00}
\definecolor{AlexComment}{HTML}{8A2BE2}
\definecolor{HannaComment}{HTML}{E0008A}
\definecolor{WilliamComment}{HTML}{0077E6}
\definecolor{AlekComment}{HTML}{00A651}

\newcommand{\defn}[1]{\textbf{\textit{#1}}}
\newcommand{\imod}[2]{\left[#1\right]_{#2}}  
\newcommand{\hreal}{h^{\mathrm{real}}}
\newcommand{\hmod}{h^{\mathrm{mod}}}
\newcommand{\maxload}{\operatorname{maxload}}

\usepackage{todonotes}
\hypersetup{
  pdftitle={Lower Bounds for Linear Hashing via Arithmetic Kakeya},
  pdfauthor={Ainesh Bakshi, Alex Conway, Hanna Komlós, William Kuszmaul, Alek Westover},
  pdfdisplaydoctitle=true,
  bookmarksnumbered=true
}

\title{Lower Bounds for Linear Hashing via Arithmetic Kakeya}

\author[1]{Ainesh Bakshi}
\author[2]{Alex Conway}
\author[3]{Hanna Komlós}
\author[4]{William Kuszmaul}
\author[5]{Alek Westover}

\affil[1]{New York University (\href{mailto:ainesh@nyu.edu}{\texttt{ainesh@nyu.edu}})}
\affil[2]{Cornell University (\href{mailto:me@ajhconway.com}{\texttt{me@ajhconway.com}})}
\affil[3]{Max Planck Institute for Informatics (\href{mailto:hkomlos@gmail.com}{\texttt{hkomlos@gmail.com}})}
\affil[4]{Carnegie Mellon University (\href{mailto:kuszmaul@cmu.edu}{\texttt{kuszmaul@cmu.edu}})}
\affil[5]{Redwood Research (\href{mailto:alek.westover@gmail.com}{\texttt{alek.westover@gmail.com}})}
\date{}

\begin{document}

\maketitle

\begin{abstract}
Affine modular linear hashing is one of the simplest classical hash families.
For a prime \(p>u\), the hash function is obtained by choosing \(s,t\) uniformly from \(\mathbb Z_p\) and mapping each key \(x\in\{0,\ldots,u-1\}\) to one of \(n\) bins by
\[
\hmod_{s,t}(x)
\coloneqq
\bigl[(sx+t)\bmod p\bigr]\bmod n.
\]
Despite its simplicity, the maximum load of linear hashing remains poorly understood.
For \(n\) keys hashed into \(n\) bins, the best known upper bound is \(O((n\log n)^{1/3})\), whereas the best known lower bound is only \(\Omega(\log n/\log\log n)\).
We prove a lower bound of \(\exp(\Omega(\log n/\log\log n))\) for universes of size \(n^{1+o(1)}\).
Surprisingly, there is a key set for which this load holds not just in expectation, but for every random seed.
The proof is driven by two simple reductions: one transfers lower bounds from a real version of linear hashing to modular linear hashing, and the other transfers arithmetic Kakeya constructions to real hashing.
We further show that, for sufficiently large \(p\), the expected maximum loads in the modular and real settings are essentially the same, giving an alternative route to an \(n^{1/3+o(1)}\) upper bound.
Finally, we show that any uniform subpolynomial upper bound for either setting would imply a polynomial-length arithmetic Kakeya conjecture and hence the Kakeya conjecture for upper Minkowski dimension.
\end{abstract}

\begingroup
\renewcommand\thefootnote{}
\footnotetext{Independent concurrent work of Zamir~\cite{zamir2026linearhashingawesome} obtains the same super-polylogarithmic lower bound for linear hashing. The two works differ in their surrounding frameworks and additional results. }
\endgroup

\section{Introduction}
\label{sec:intro}

Hash functions are used throughout computer science for load distribution.
Among other applications, they are used by hash tables to distribute keys across locations~\cite{KnuthVol3,dietzfelbinger-et-al1994dynamic,pagh-rodler2004cuckoo}, by low-associativity caches to distribute cache blocks among sets~\cite{seznec1993skewed,kharbutli-solihin-lee2005cache,sanchez-kozyrakis2010zcache}, and by randomized and distributed load balancers to assign jobs or clients to servers~\cite{azar-et-al1999balanced,mitzenmacher2001power,mirrokni-thorup-zadimoghaddam2018consistent}.
In these applications, it is important that hash functions distribute keys evenly among bins, and one of the most important measures of this is the maximum load.
For example, in a hash table with separate chaining, a lookup or update may scan every key in one chain, and the maximum load therefore controls the worst-case lookup time.

An ideal fully random hash function has maximum load \(\Theta(\log n/\log\log n)\) with high probability when $n$ items are hashed to $n$ locations~\cite{raab-steger1998balls}.
However, fully random hash functions are not practical because they require high space and time to evaluate.
As a result, there is a long line of research both in theory and practice which looks for hash families that give comparable guarantees to fully random hash families, but which are small and fast to evaluate~\cite{carter1977universal,dietzfelbinger1996universal,Siegel2004ExtremelyRandomConstantTime,PaghPagh,tabulationoriginal,thorup2017fast}.

This paper studies one such hash family, known as \defn{affine modular linear hashing} (or linear hashing, for short, when it is unambiguous).
A modular linear hash function maps keys from a universe \(U = [u]\) to \([n]\)\footnote{Throughout the paper we write \([k]\coloneqq \{0,1,\dots,k-1\}\).} using modular arithmetic with a fixed prime \(p > u\). The hash function is parameterized by random $s, t \in \mathbb{Z}_p$, and maps each key $x$ to
\[
\hmod_{s,t}(x) = \imod{\imod{s x+t}{p}}{n},
\]
where $\imod{a}{b}$ is the residue of $a$ modulo $b$. Note that this hash function requires only $O(\log p)$ bits to store, and can be evaluated in $O(1)$ time.

Linear hashing was first studied in a classic 1979 paper by Carter and Wegman \cite{carter1977universal}, who showed that the underlying affine maps over \(\mathbb Z_p\) form a \(2\)-universal and pairwise-independent family.\footnote{After the final reduction to \([n]\), the marginals are only nearly uniform, but two distinct keys still collide with probability \(O(1/n)\), which is all that we use here.}
This pair-collision bound gives a generic \(O(\sqrt n)\) expected maximum-load bound when a set \(X\) of \(n\) keys is hashed into \(n\) bins.
Indeed, the expected number of colliding pairs of keys is \(O(n)\), while a bin of load \(L\) contributes \(\binom{L}{2}\) colliding pairs.

Subsequent work showed that the \(O(\sqrt n)\) bound is tight for general pairwise-independent hash families \cite{alon1997linear}.
However, it remained possible that linear hashing specifically achieves a better bound.
Nearly four decades after Carter and Wegman's original analysis \cite{carter1977universal}, Knudsen showed this to be the case: in a paper titled \emph{Linear Hashing is Awesome}, he proved that the expected maximum load is \(O((n\log n)^{1/3})\) \cite{knudsen2016linear,knudsen2019linear}.

Perhaps surprisingly, on the lower bound side, essentially nothing is known. It has remained possible that, in terms of expected maximum load, linear hashing might even match the $O(\log n / \log \log n)$ bound achieved by fully random hash functions. If so, it would be the first known explicit hash function with $O(\log u)$ description bits (where $u$ is the universe size) to achieve this bound \cite{celis2013balls,meka2014fast,kuszmaul2022hash}.

\paragraph{This paper: A new lower bound via two reductions. }
The main result of this paper is a lower bound showing that the expected maximum load using modular linear hashing is
\begin{equation}
    \exp(\Omega(\log n / \log \log n)).
\label{eq:singleloglog}
\end{equation}
This represents an \emph{exponential} improvement over the previous best lower bound of \(\Omega(\log n / \log \log n)\) \cite{knudsen2016linear}, which, when \(u \ge n^2\), follows simply by considering a random key set \(X\)~\cite[Theorem~5.1]{celis2013balls}.
Interestingly, our proof of \eqref{eq:singleloglog} actually establishes a stronger fact: there exists a set \(X\) of \(n\) elements such that, \emph{for every pair \(s, t \in \mathbb{Z}_p\)}, the maximum load of \(\hmod_{s, t}\) on \(X\) is \(\exp(\Omega(\log n / \log \log n))\).
Thus the lower bound holds not just in expectation, but even pointwise.

The new lower bound is achieved by chaining together two simple reductions, each of which allows us to transfer known bounds from one problem to another.
Therefore, we view another contribution of the paper to be establishing the connections between these seemingly unrelated problems.

The first reduction relates the maximum load under \emph{modular} linear hashing (Problem 1) to the maximum load of a continuous version of the problem, \defn{real linear hashing} (Problem 2).
We define the hash functions $\hreal_a(x)$ parameterized by a random real number $a \in [0, 1)$ as
\[\hreal_a(x) = \lfloor n \{ax\} \rfloor,\]
where $\{ax\}$ denotes the fractional part of $ax$.\footnote{Real linear hashing was introduced in an earlier preprint by one of the authors~\cite{westover2023relationship}. \Cref{thm:problem2-to-problem1,thm:stronger-connection} can be viewed as strengthening results in that work.}
We show that any lower bound in the real-valued setting that holds for \emph{all} $a \in [0, 1)$ implies the same lower bound in the modular-valued setting that holds for \emph{all} $s,t \in \mathbb{Z}_p$.
This reduction allows us to reinterpret known results in combinatorial number theory \cite{konyagin-ruzsa-schlag2000} to immediately obtain a lower bound of
\[\exp(\Omega(\log n / (\log \log n)^2)).\]

To improve this bound to \eqref{eq:singleloglog}, we relate the real-valued version of linear hashing to a third problem, known as the \defn{arithmetic Kakeya problem} (Problem 3).
Problem 3, which is also studied in combinatorial number theory, seeks, for given integers \(K,D\), to construct an integer set \(X\) that is as small as possible while containing \(K\)-term arithmetic progressions with every positive difference \(d\in[D]\).
We show that known constructions for Problem 3 imply a new $\exp(\Omega(\log n / \log \log n))$ lower bound for Problem 2.
Chaining together both of our reductions, we obtain the same lower bound for Problem 1.
Dhar and Dvir~\cite{dhar-dvir2024linear} previously connected linear hashing and Kakeya sets in a different but related problem, as we discuss in \Cref{sec:related}.

\paragraph{Further strengthening the relationship between Problems 1 and 2. }We also extend our reduction between Problems 1 and 2 to obtain an \emph{equivalence result}: We show, for $p$ sufficiently large, the expected maximum load under modular linear hashing is \emph{the same} as the expected maximum load under real linear hashing differ by at most $1$.

As an immediate consequence, we are able to obtain a new path to Knudsen's $n^{1/3 + o(1)}$ upper bound (albeit with a worse $o(1)$ term).
Specifically, we can take a known upper bound for (the average-case version of) Problem 2 \cite{konyagin-ruzsa-schlag2000} and transform it directly into an upper bound for Problem 1.

\paragraph{A conditional obstacle to subpolynomial upper bounds.}
Finally, our reductions also present a natural obstacle to obtaining significantly better \emph{upper bounds} for either of Problems 1 or 2 and, specifically, for the expected maximum load of modular linear hashing.
We show that a subpolynomial bound would imply a polynomial-length version of the arithmetic Kakeya conjecture, which is known to imply  the \defn{Kakeya conjecture for upper Minkowski dimension}.
A \defn{Besicovitch set} in \(\mathbb R^d\) is a compact measure-zero set containing a unit line segment in every direction, and this conjecture asserts that every such set has upper Minkowski dimension \(d\) \cite{bourgain1991remarks,bourgain1993distribution,green-ruzsa2019arithmetic}; see \Cref{sec:hashing-implies-kakeya}.

\paragraph{Roadmap.}
We discuss related work in the remainder of this section.
In \Cref{sec:problems}, we formally introduce the three problems described above and present the reductions.
In \Cref{sec:uppers}, we strengthen the relationship between Problems 1 and 2 to include upper bounds.
In \Cref{sec:hashing-implies-kakeya}, we show that subpolynomial bounds for either of Problems 1 or 2 imply the Kakeya conjecture for upper Minkowski dimension.

\subsection{Related Work}
\label{sec:related}

\paragraph{Affine modular linear hashing.}
Carter and Wegman introduced affine modular linear hashing as a particularly simple universal family~\cite{carter1977universal}.
The pair-collision bound inherited from the underlying field-valued family gives the generic \(O(\sqrt n)\) expected maximum-load bound, which Knudsen improved to \(O((n\log n)^{1/3})\)~\cite{knudsen2016linear,knudsen2019linear}.
Knudsen's analysis gives the same bound for multiply--shift hashing, a closely related constant-time family over power-of-two word universes introduced by Dietzfelbinger et al.~\cite{dietzfelbinger-et-al1997multiply-shift}.
Under a distributional rather than worst-case input model, Chung, Mitzenmacher, and Vadhan showed that every \(2\)-universal family, including standard Carter--Wegman families, behaves like fully random hashing in several applications, including chained hashing, when each key contributes enough conditional R\'enyi entropy~\cite{chung-mitzenmacher-vadhan2013entropy}.
Westover's preliminary version of this paper also studied shiftless, composite-modulus, and real variants of affine modular linear hashing~\cite{westover2023relationship}.

\paragraph{Finite-field linear hashing.}
A separate line of work also confusingly referred to as ``linear hashing'' studies hashing by a uniformly random linear map \(h\colon \mathbb F_2^d\to\mathbb F_2^r\), rather than by the scalar affine map over \(\mathbb Z_p\) studied here.
For \(u=2^d\) and \(n=2^r\), a dense matrix representation uses \(\Theta(\log u\log n)\) bits and straightforward evaluation uses \(O(\log u)\) word-level XOR operations; when \(u\) and \(n\) are polynomially related, these become \(\Theta((\log u)^2)\) bits and \(O(\log u)\) operations~\cite{jaber-kumar-zuckerman2025linear}.
For \(n\) keys and \(n\) bins, the expected maximum-load bound progressed from \(O(n^{1/4})\) to \(2^{O(\sqrt{\log n})}\), \(O(\log n\log\log n)\), and \(O(\log n)\)~\cite{markowsky-carter-wegman1978analysis,mehlhorn-vishkin1984simulations,alon1997linear,babka2018largest-bins}.
Jaber, Kumar, and Zuckerman recently obtained the optimal \(O(\log n/\log\log n)\) bound~\cite{jaber-kumar-zuckerman2025linear}.

\paragraph{Connections between linear hashing and Kakeya.}
Recently, Dhar and Dvir proved strong two-sided \(\ell_\infty\)-load guarantees for finite-field linear hashing using finite-field Kakeya bounds~\cite{dhar-dvir2024linear}.
They observed that extending these guarantees to affine modular linear hashing would imply a strong generalization of the full arithmetic Kakeya problem, since its buckets are modular arithmetic progressions in every difference.
In contrast, we show that much weaker subpolynomial single-seed maximum-load upper bounds imply the weaker polynomial-length arithmetic Kakeya conjecture, which nevertheless suffices to establish the Kakeya conjecture for upper Minkowski dimension.

\paragraph{Efficient hashing with near-random guarantees.}
Wegman and Carter extended affine modular linear hashing to a degree-\((k-1)\) polynomial construction, which gives \(k\)-wise independent hashing over finite fields using \(O(k\log u)\) description bits and \(O(k)\) field operations per evaluation~\cite{wegman_carter_1981_new_hash_functions}.
Taking \(k=\Theta(\log n/\log\log n)\), a standard limited-independence argument gives maximum load \(O(\log n/\log\log n)\) with high probability.
For polynomial-size universes, Celis et al. reduced the description length for the same guarantee to \(O(\log n\log\log n)\) bits, and Meka et al. retained this description length while reducing evaluation to \(O((\log\log n)^2)\) word operations~\cite{celis2013balls,meka2014fast}.
Another line of work trades polynomial space for constant evaluation time: Siegel constructed highly independent families and established the governing time--space lower bounds, while Christiani, Pagh, and Thorup gave word-RAM constructions close to this tradeoff~\cite{Siegel2004ExtremelyRandomConstantTime,christiani2015independence}.
Pagh and Pagh instead obtained constant-time, near-entropy-optimal hashing that is fully random with high probability on any fixed \(n\)-key set~\cite{PaghPagh}.
Simple tabulation hashing, despite being only \(3\)-wise independent, achieves the optimal maximum load with constant evaluation time, and newer tabulation schemes provide local-uniformity guarantees and efficient full randomness on fixed sets~\cite{tabulationoriginal,bercea2023locally}.

\paragraph{Real linear hashing / uniform dilation.}
Montgomery records the underlying uniform-dilation question and attributes it to Koml\'os and Ruzsa \cite[Problem 17(4)]{montgomery1994ten}.
This led to the work of Konyagin, Ruzsa, and Schlag discussed above~\cite{konyagin-ruzsa-schlag2000}, which builds on the ideas of Alon and Peres~\cite{alon-peres1992uniform}.
They also posed a modular analogue concerning concentration in residue classes, proved an \(O(\sqrt n)\) upper bound, and conjectured an \(O(\log n)\) bound \cite[Question 7.1]{konyagin-ruzsa-schlag2000}.

\paragraph{Arithmetic Kakeya.}
The arithmetic Kakeya problem is a discrete analogue of the geometric one: arithmetic progressions with every common difference play the role of line segments in every direction.
Bounds for such progression-rich sets feed into arithmetic-projection arguments that yield dimension bounds for Besicovitch sets.
This arithmetic-projection approach to the Kakeya problem originated in work of Bourgain \cite{bourgain1991remarks,bourgain1993distribution} and was developed through the sums--differences estimates of Katz and Tao \cite{katz-tao1999arithmetic-projections,katz2002newbounds}.
Ruzsa later posed the progression formulation defining \(F_k(D)\) as the minimum size of a finite set of integers containing a \(k\)-term arithmetic progression with every positive common difference \(d\in[D]\), without identifying its Kakeya connection \cite[Conjecture 4.2]{ruzsa2005sumsets}.
Green and Ruzsa proved that this formulation is equivalent to the Katz--Tao projection conjecture, an entropy inequality, finite-field progression conjectures, and a prime-multiples problem of Erd\H{o}s and Selfridge \cite[Theorem 1.1]{green-ruzsa2019arithmetic}.
The conjecture remains open.
They also constructed sets showing that
\[
 \lim_{D\to\infty}\frac{\log F_k(D)}{\log D}
 \leq
 1-\frac{c}{\log\log k}
\]
for an absolute constant \(c>0\) \cite[Theorem 1.2]{green-ruzsa2019arithmetic}.
Already for \(k=3\), the arithmetic-projection bound of Katz and Tao gives \(F_3(D)\geq D^{6/11-o(1)}\), while the best general projection exponent remains their value \(\alpha=1.6751308\ldots\), the largest root of \(\alpha^3-4\alpha+2=0\) \cite{katz-tao1999arithmetic-projections,katz2002newbounds,tao2025sum-difference}.
Bourgain showed that it already suffices for the Kakeya conjecture for upper Minkowski dimension to prove the weaker polynomial-length statement with \(k=D^{\eta}\) for every fixed \(\eta>0\) \cite{bourgain1991remarks,bourgain1993distribution,green-ruzsa2019arithmetic}.
Cowen-Breen, Karangozishvili, Varadarajan, and Wang established further equivalent pattern formulations and strengthened the geometric consequence of the full arithmetic conjecture from Minkowski to packing dimension \cite{cowen-breen-et-al2020pattern}.

\section{Connecting Together Three Problems}\label{sec:problems}

This section defines the three problems that we study in this paper, discusses the known bounds for each of them, and then presents two simple reductions that allow us to obtain new lower bounds for Problems 1 and 2.

\paragraph{Problem 1: Expected maximum load for modular linear hashing.}
Fix parameters \(0 < n \le u  < p\), where \(p\) is prime, and let \(U=[u]\) be the key universe.

For an integer \(z\), write \(\imod{z}{m}\in \{0,\ldots,m-1\}\) for its remainder modulo \(m\).
Then the affine modular linear-hashing family is the collection of hash functions \(\hmod_{s,t}:U \rightarrow [n]\) defined by
\[
 \hmod_{s,t}(x)\coloneqq
 \imod{\imod{s x+t}{p}}{n},
 \qquad s,t\in[p].
\]
The additive parameter \(t\) changes the maximum load by at most a factor of two\footnote{
Indeed, fix \(s,t\), and write \(r_x\coloneqq\imod{sx}{p}\) and \(r'_x\coloneqq\imod{r_x+t}{p}\).
For every key \(x\), either \(r'_x=r_x+t\) or \(r'_x=r_x+t-p\).
Consequently, the keys in one bin \(j\) under \(\hmod_{s,t}\) lie in at most the two bins \(\imod{j-t}{n}\) and \(\imod{j-t+p}{n}\) under \(\hmod_{s,0}\), while the keys in one bin \(j\) under \(\hmod_{s,0}\) lie in at most the two bins \(\imod{j+t}{n}\) and \(\imod{j+t-p}{n}\) under \(\hmod_{s,t}\).
Thus, for every key set \(X\),
 $\frac{1}{2}\maxload_X(\hmod_{s,0})
 \leq
 \maxload_X(\hmod_{s,t})
 \leq
 2\maxload_X(\hmod_{s,0})$.
}.
Because the bounds in this paper are insensitive to constant factors, we may therefore set \(t=0\), and write
\[
 \hmod_s\coloneqq\hmod_{s,0}.
\]
As a convention we will use \(\mathcal{H}_p^{n}\) to denote the uniform distribution over functions \(\hmod_s\) (i.e., with \(s\) uniformly random in \([p]\)).

For any set \(X\subseteq U\) of keys, and for any hash function \(h:U\to[n]\), define
\[
 \maxload_X(h)\coloneqq
 \max_{j\in[n]}\big|\{x\in X:h(x)=j\}\big|.
\]
The first problem that we consider, which we call the \defn{Modular Linear Hashing Problem (Problem 1)}, is to determine worst-case bounds for the expected maximum load of $\hmod_s$ on any set $X \subseteq U$ of $n$ keys, where the expectation is over the choice of function $\hmod_s$ from $\mathcal{H}_p^n$.
That is, we wish to bound
\begin{equation}
\max_{X \subseteq U, |X| = n} \mathbb E_{\hmod_s \sim \mathcal{H}_p^{n}}[\maxload_X(\hmod_s)].
\label{eq:modular-linear-hashing-problem}
\end{equation}

On the upper-bound side, Knudsen \cite{knudsen2016linear,knudsen2019linear} shows that \eqref{eq:modular-linear-hashing-problem} is at most $O((n\log n)^{1/3})$.
\begin{theorem}[\cite{knudsen2016linear,knudsen2019linear}]
Let \(0<n\leq u<p\), where \(p\) is prime, and let \(X\subseteq U\) be a set of \(n\) keys.
Then, for \(\hmod_s\sim\mathcal{H}_p^n\), the expected maximum load of \(\hmod_s\) on \(X\) is at most \(O((n\log n)^{1/3})\).
\label{thm:knudsen-upper-bound}
\end{theorem}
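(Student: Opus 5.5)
The plan is to bound the tail $\Pr_s[\maxload_X(\hmod_s)\ge L]$ by $O(n\log n/L^3)$ and then integrate. Summing this tail over $L$ beyond $(n\log n)^{1/3}$ gives the claimed expectation.

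\emph{Step 1: translate a heavy bin into a statement about differences.} Suppose a bin $j$ contains keys $x_1,\dots,x_L$. Each such key satisfies $\imod{s x_i}{p}=j+n m_i$ with $0\le m_i<p/n$. So for every pair $i,i'$ the difference $d=x_i-x_{i'}\in X-X$ satisfies
\[
s d\equiv n(m_i-m_{i'}) \pmod p.
\]
Call $d$ \emph{$K$-short for $s$} if $s d\equiv nk\pmod p$ for some integer $|k|\le K$. For fixed nonzero $d$ with $|d|<p$, the map $s\mapsto sd$ is a bijection of $\mathbb Z_p$. Hence $\Pr_s[d\text{ is }K\text{-short}]\le (2K+1)/p$.

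\emph{Step 2: pigeonhole inside the heavy bin.} The $L$ values $m_i$ lie in a range of length $p/n$. Partition this range into windows of length $K$; the number of pairs whose $m$-values share a window is at least roughly $L^2Kn/p$. Comparing with the expected number $O(n^2K/p)$ of $K$-short pairs in $X$ alone only reproduces the generic bound $\Pr[\maxload\ge L]\lesssim n/L^2$. The extra factor of $L$ must come from second-order structure.

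\emph{Step 3: the key step and expected obstacle.} Short differences are closed under addition: if $d,d'$ are $K$-short then $d+d'$ is $2K$-short. Also, for a fixed $s$, the set
\[
B_s(K)=\{d:\ s d\equiv nk,\ |k|\le K\}
\]
is a Bohr-type set, essentially a generalized arithmetic progression of dimension at most $2$ determined by a best rational approximation of $s/p$ relative to $n$. I would prove a structural lemma: at every scale $K=2^r$, the keys of $X$ in a heavy bin whose $m$-values lie in a common window must lie on few progressions with a common short step $d_r$. Then count triples $(x,y,z)\in X^3$ with $x-y$ and $y-z$ both in $B_s(2^r)$ for some dyadic $r$.

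A heavy bin forces $\Omega(L^3)$ such triples at the scale $2^r\approx p/(nL)$. In expectation over $s$, the triple count should be at most $O(n^2)$ per scale. The reason is that fixing the short step $d_r$ (probability about $2^r n/p$ over $s$) makes the remaining elements of the progression essentially determined. Summing over the $O(\log n)$ dyadic scales then gives the $\log n$ loss, and Markov's inequality yields $\Pr[\maxload\ge L]=O(n\log n/L^3)$.

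The hard part is exactly this triple-counting lemma. It requires a continued-fraction (Dirichlet approximation) analysis of $s/p$ to show that for most $s$ the set $B_s(K)$ is genuinely one-dimensional at the relevant scale. It also requires handling the rare $s$ where it is two-dimensional by a separate crude count. I would attack it by parametrizing $s$ via the convergent $a/q$ of $sn^{-1}/p$ and bounding, for each $q$, both the number of such $s$ and the number of elements of $X$ that can lie on a progression of step $q$.
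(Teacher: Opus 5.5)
The paper gives no proof of this theorem. It is imported from Knudsen, and the paper's own nearby route (\cref{cor:modular-upper-from-krs}) only yields $n^{1/3}e^{O(\sqrt{\log n\log\log n})}$, and only for $p>n^2u$. Judged on its own, your proposal has a genuine gap. Steps~1--2 are fine, but the triple-counting lemma you defer is the entire content of the theorem, and the bookkeeping around it does not produce the tail you claim.

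(i) At scale $K\approx p/(nL)$, a bin with $L$ keys is not forced to contain $\Omega(L^3)$ short triples. Your own Step~2 count guarantees only about $L^2nK/p\approx L$ short pairs there. In the motivating configuration (the $L$ keys form a progression with step $d$, $sd\equiv nk$, $|k|\approx p/(nL)$), only consecutive keys are $K$-short, giving $O(L)$ pairs and triples. All triples of bin-mates become short only at the bin scale $K\approx p/n$.

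(ii) An expected count of $O(n^2)$ per scale is trivial: it follows from the pair bound $(2K+1)/p$ with $z$ left free. Summed over $O(\log n)$ scales and fed to Markov, it gives $\Pr[\maxload\ge L]=O(n^2\log n/L^3)$, i.e.\ $\mathbb E[\maxload]=O((n^2\log n)^{1/3})$. That is worse than the generic $O(\sqrt n)$.

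(iii) The probability ``about $2^rn/p$'' contradicts your Step~1 value $(2K+1)/p$. Also, $s=0$ alone contributes $n^3/p$ to any triple count. This is of order $n^2$ when $p\approx n$, so it must be split off before any moment argument.

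What your route actually needs is $\mathbb E_s[Z(s)\mathbf 1_{s\ne0}]=O(n\log n)$, where $Z(s)$ counts ordered triples of distinct keys whose two differences are $K$-short with $K\approx p/n$. Markov, or Jensen as in \cref{prop:krs-average-upper}, then finishes.

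Fix distinct $x,y,z$. The nonzero $s$ making both $x-y$ and $y-z$ $K$-short are in bijection with the nonzero points of the index-$p$ lattice $\Lambda=\{(k_1,k_2):k_1(y-z)\equiv k_2(x-y)\pmod p\}$ inside $[-K,K]^2$. So this probability is $\lesssim n^{-2}+\mathbf 1[h\le K]/(nh)$, where $h$ is the $\ell_\infty$-length of a shortest nonzero vector of $\Lambda$, and the second term is sharp when $h\le K$.

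Your route therefore reduces to a uniform estimate: $\sum_{x,y,z\in X}\mathbf 1[h\le K]/h=O(n^2\log n)$ for every $n$-set $X$. Here $h$ is a modular height, which can be much smaller than the height of the rational $(x-y)/(y-z)$ when $p$ is not much larger than $u$. This is a statement about the additive structure of an adversarial $X$, not about typical $s$. Two facts show there is no room to be loose here:
\begin{itemize}
\item For $X=[n]$ and large $p$ the sum is already $\Theta(n^2\log n)$. The $\log n$ is the harmonic sum over $h$, i.e.\ over the step of the progression that $s$ makes short.
\item The analogous real-setting triple bound that the paper quotes from Konyagin--Ruzsa--Schlag is only $n\,e^{O(\sqrt{\log n\log\log n})}$.
\end{itemize}
Your continued-fraction plan describes the shape of $B_s(K)$ for most $s$. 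It gives no control on how many triples of $X$ lie on the short progression for the rare $s$ that produce heavy bins, which is exactly the quantity that must be bounded. As it stands, the proposal does not establish the $O((n\log n)^{1/3})$ bound.
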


In this paper, we will be primarily interested in lower bounds for \eqref{eq:modular-linear-hashing-problem}.
Here, the only known non-trivial bound is the \(\Omega(\log n / \log \log n)\) bound that applies to \emph{any} family of hash functions when \(u \ge n^2\) (this lower bound can be seen by taking \(X\) to be uniformly random)~\cite[Theorem~5.1]{celis2013balls}.

\paragraph{Problem 2: Minimum max load for real linear hashing.}
To define the second problem, we again consider parameters $0 < n \le u$, and let $U=[u]$ be the key universe.
Write \(\mathbb T=\mathbb R/\mathbb Z\) identified with \([0,1)\), and let \(\{y\}\in[0,1)\) denote the fractional part of a real number \(y\).

The real linear-hashing family is the (infinite) collection of hash functions $\hreal_a:U \to [n]$ defined by
\[
 \hreal_a(x)\coloneqq\left\lfloor n\{ax\}\right\rfloor,
 \qquad a\in\mathbb T.
\]
As a convention we will use $\mathcal{H}_{\mathbb T}^n$ to denote the uniform distribution over $\hreal_a$ (i.e., with $a$ uniformly random in $\mathbb T$).

The second problem that we consider, which we call the \defn{Best-Case Real Linear Hashing Problem (Problem 2)}, is to determine the max load 
\begin{equation}
K(n, u) \coloneqq \max_{X \subseteq U, |X| = n} \min_{a \in \mathbb T} \maxload_X(\hreal_a).
\label{eq:krs-quantity}
\end{equation}
Problem 2 has been studied in combinatorial number theory as a question about whether all integer sets \(X\) have some dilate \(aX\) with reasonably uniform fractional parts~\cite{konyagin-ruzsa-schlag2000,montgomery1994ten,green2025open}.
Note that, here, unlike in Problem 1, we maximize over key sets \(X\) the \emph{best-case} max load across all seeds \(a\in\mathbb T\), rather than the average-case max load for a uniformly random seed \(a\in\mathbb T\).
It turns out, however, that this distinction is mostly immaterial to the known bounds: the best known upper and lower bounds for Problem 2 are the same in either case.

In the language defined above, the state of the art for Problem 2 is the following theorem of Konyagin, Ruzsa, and Schlag~\cite{konyagin-ruzsa-schlag2000}.
The precise upper bound we state comes from a combination with earlier work of Alon--Peres and Alon--Ruzsa \cite{alon-peres1992uniform,alon-ruzsa1999nonaveraging}.
\begin{theorem}[\cite{konyagin-ruzsa-schlag2000,alon-peres1992uniform,alon-ruzsa1999nonaveraging}]
For every sufficiently large \(n\), there exists a universe size \(u=u(n)=n^{1+o(1)}\) such that \(K(n,u)\geq\exp(\Omega(\log n/(\log\log n)^2))\).
Moreover, for every \(u\geq n\), \(K(n,u)\leq n^{1/3}\exp(O(\sqrt{\log n\log\log n}))\).
\label{thm:krs-bounds}
\end{theorem}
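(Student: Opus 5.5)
For the lower bound I would use a Dirichlet-approximation argument that turns sets containing many arithmetic progressions into sets on which every dilate is concentrated. Suppose \(X\subseteq U\) has \(|X|=n\) and, for every \(d\in\{1,\dots,D\}\), contains a \(k\)-term progression \(x_0,x_0+d,\dots,x_0+(k-1)d\). Fix any \(a\in\mathbb T\). By Dirichlet's theorem there is \(d\le D\) with \(\|ad\|\le 1/D\), where \(\|\cdot\|\) is distance to the nearest integer. Then the points \(\{a(x_0+jd)\}\), \(0\le j<k\), lie on an arc of \(\mathbb T\) of length at most \((k-1)/D\). If \(D\ge kn\), this arc has length below \(1/n\). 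Such an arc meets at most two of the intervals \([j/n,(j+1)/n)\), so some bin receives at least \(k/2\) keys. This holds for every \(a\), hence \(K(n,u)\ge k/2\) whenever \(X\subseteq[u]\).

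So the task reduces to constructing a set of size \(n\) inside \([u]\), \(u=n^{1+o(1)}\), that contains \(k\)-term progressions for all differences up to \(D\ge kn\), with \(k=\exp(\Omega(\log n/(\log\log n)^2))\). I would build it by the Konyagin--Ruzsa--Schlag style product construction. Take \(X\) to be the integers whose base-\(M\) digit vectors lie in a structured set, for instance a sumset of dilated intervals over several primes. Each \(d\le D\) is handled digit-by-digit via the Chinese remainder theorem, with the number of digits and the base tuned so that \(|X|\) is only \(D^{1-\epsilon}\) with \(\epsilon\approx c/(\log\log k)^{2}\). Solving \(D^{1-\epsilon}=n\) and \(D\ge kn\) then forces \(\log k\approx \epsilon\log n\), which gives the stated \(k\).

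The main obstacle in this half is the counting in the product construction, namely getting the exponent saving \(\epsilon\) to depend only doubly-logarithmically on \(k\). This is the number-theoretic heart of the argument and is where I would lean most on KRS.

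For the upper bound, fix any \(X\) with \(|X|=n\). I would apply the Erd\H{o}s--Tur\'an inequality: for every \(a\) and every arc \(I\) of length \(1/n\),
\[
\bigl|\#\{x\in X:\{ax\}\in I\}-1\bigr|\lesssim \frac{n}{H}+\sum_{h\le H}\frac{1}{h}\Bigl|\sum_{x\in X}e(hax)\Bigr|.
\]
It then suffices to exhibit one \(a\) at which the exponential sums are small. Averaging \(|S(ha)|^2\) over uniform \(a\) gives only the trivial \(\sqrt n\) bound. To reach \(n^{1/3}\) I would follow Alon--Peres. Choose \(a\) from a structured random set, such as \(a=b/q\) with \(q\) a random prime in a dyadic range, and take higher moments, so that the main term counts solutions of \(h_1x_1-h_2x_2\equiv 0\) with a gain.

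Balancing the resulting moment bound against \(H\) should yield \(n^{1/3}\) up to the \(\exp(O(\sqrt{\log n\log\log n}))\) loss. That loss comes from divisor-type bounds in the Alon--Ruzsa non-averaging estimate, which I would import directly. I expect this moment computation over restricted rationals to be the hardest step of the whole proof.
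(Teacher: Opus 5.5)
Your lower-bound reduction is exactly the paper's \cref{thm:progressions-force-load}. However, the set that must be fed into it is never actually constructed. A condition on base-$M$ digits is not preserved when you add $d$ repeatedly, because of carries, so a product construction has to be phrased as a sumset $A_1+D_1A_2$ or via residues modulo coprime primes. You also need the saving $|X|\le D^{1-\varepsilon}$ to hold uniformly while $k$ grows with $D$, which an asymptotic statement for fixed $k$ does not give. The paper supplies this in \Cref{app:arithmetic-kakeya} with the explicit Green--Ruzsa set. For $D=p_1\cdots p_m$ a product of odd primes, let $x_d\equiv d^2\pmod D$ and $S=\{x_d+jd: 1\le d<D,\ j<k\}\subseteq[kD]$. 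Since $4(x_d+jd)\equiv(2d+j)^2-j^2\pmod{p_i}$ takes only $(p_i+1)/2$ values, $|S|\le k^2 2^{-m}D\prod_i(1+1/p_i)\le k^2(2/3)^mD$. Choosing $m$ minimal with $D_m>nK_m$, where $K_m=\lfloor(3/2)^{m/5}\rfloor$, gives $|S|\le n$, $u=KD=n^{1+o(1)}$ and $K=\exp(\Theta(\log n/\log\log n))$, which is stronger than the stated bound.

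The upper-bound plan has a genuine gap: it fails at the Erd\H{o}s--Tur\'an step. That inequality's right-hand side does not depend on the arc, so it bounds the discrepancy over all arcs simultaneously. Making it at most $n^{1/3}$ requires $H\ge n^{2/3}$ and $\sum_{h\le H}|S(ha)|/h\le n^{1/3}$, i.e.\ better than square-root cancellation at many frequencies at once. A moment argument cannot certify this. Whether $a$ is uniform on $\mathbb T$ or on a grid $b/q$ (with $q$ random or not), the diagonal solutions alone give $\mathbb E|S(ha)|^{2k}\ge n^k$, so the best you can extract is $|S(ha)|\approx\sqrt n$. Your main term counting $h_1x_1-h_2x_2\equiv 0$ is a pair count, i.e.\ the second moment behind the generic $O(\sqrt n)$ bound. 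For a random key set one expects every dilate to have discrepancy about $\sqrt n$ up to logarithms, so no special choice of $a$ rescues this route.

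The $n^{1/3}$ comes instead from a third moment of the load in physical space, as in \cref{prop:krs-average-upper}. Let $Z(a)$ count ordered triples of distinct keys with $\|a(x_i-x_j)\|_{\mathbb T},\|a(x_i-x_k)\|_{\mathbb T}<1/n$. A bin of load $M$ forces $Z(a)\ge M(M-1)(M-2)$. The KRS estimate, with the Alon--Peres and Alon--Ruzsa bounds, gives $\mathbb E_a Z(a)\le n\exp(O(\sqrt{\log n\log\log n}))$ for uniform $a$. Jensen then gives $\mathbb E_a M\le(\mathbb E_a M^3)^{1/3}\le n^{1/3}\exp(O(\sqrt{\log n\log\log n}))$, and $\min_a\le\mathbb E_a$. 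The point is that pairs sharing a window already number about $n$ on average, which only yields $\sqrt n$, while triples still number only $n^{1+o(1)}$.
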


We note that this connection also allows us to transfer Knudsen's slightly better upper bound of \(O(n\log{n})^{1/3}\)~\cite{knudsen2016linear,knudsen2019linear} for modular linear hashing to the real hashing problem, which slightly improves the best known upper bound above.

\paragraph{Problem 3: A set with arithmetic progressions in every step size.}
The final problem that we study is the \defn{Arithmetic Kakeya Set Problem (Problem 3)}, which is the problem of constructing a small set of integers that contains long arithmetic progressions with every positive difference in \([D]\).

Katz and Tao conjectured a lower bound on the size of arithmetic Kakeya sets, known as the arithmetic Kakeya conjecture~\cite{katz2001newbounds,katz2002newbounds}.
\begin{conjecture}
\label{conj:arithmetic-kakeya}
For positive integers \(k\) and \(D\), let \(F_k(D)\) be the minimum size of a finite set of integers containing a \(k\)-term arithmetic progression with every positive common difference \(d\in[D]\).
Then
\begin{equation}
\label{eq:arithmetic-kakeya}
 \lim_{k\to\infty}
 \lim_{D\to\infty}
 \frac{\log F_k(D)}{\log D}
 =
 1.
\end{equation}
\end{conjecture}

Green and Ruzsa showed that there exist Kakeya sets for which the ratio in \Cref{eq:arithmetic-kakeya} converges to 1 very slowly~\cite{green-ruzsa2019arithmetic}, which we will use for the lower bound.
In this paper, we will need a slightly more explicit version of their results:
\begin{restatable}{theorem}{greenruzsatheorem}
For every sufficiently large \(n\), there are integers \(K=\exp(\Theta(\log n/\log\log n))\) and \(u=n^{1+o(1)}\), and an integer set \(X\subseteq[u]\) of size \(n\), such that \(X\) contains a \(K\)-term arithmetic progression with every positive common difference \(d\in[nK]\).
\label{thm:green-ruzsa}
\end{restatable}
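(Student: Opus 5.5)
The plan is to make the Green--Ruzsa upper bound on \(F_k(D)\) quantitatively explicit, and then choose parameters so that \(D=nK\), \(k=K\), and the set size is at most \(n\). Heuristically, a set of size \(D^{1-c/\log\log k}\) with \(D=nK\) has size at most \(n\) as long as
\[
(\log n+\log K)\Bigl(1-\tfrac{c}{\log\log K}\Bigr)\le \log n,
\]
that is, \(\log K\lesssim c\log n/\log\log K\). With \(\log K=\Theta(\log n/\log\log n)\) we have \(\log\log K=(1+o(1))\log\log n\), so this is exactly the claimed regime. The whole proof therefore reduces to a version of \cite[Theorem 1.2]{green-ruzsa2019arithmetic} that is valid \emph{non-asymptotically} in \(D\), with errors controlled uniformly in \(k\).

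\textbf{Step 1: the building block.} I would reopen the Green--Ruzsa construction rather than cite it. First, build a modular block: a modulus \(m\) and a set \(S\subseteq\mathbb Z_m\) of size \(m^{1-c/\log\log k}\) containing a \(k\)-term progression with every difference \(d\in\mathbb Z_m\). I would take \(m\) to be a product of suitable primes (or prime powers) and use the Chinese remainder theorem, so that \(S\) is a product of sets in each \(\mathbb Z_{p_i}\). The saving in density comes from the fact that a progression of length \(k\) only needs to avoid few residues in each coordinate. The outcome of this step is explicit bounds of the form \(\log m\le \exp(O(\log k))\), or at least \(\log m=k^{O(1)}\).

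\textbf{Step 2: lifting to integers.} Next, lift to integers by a base-\(m\) digit (tensor-power) construction. Integers in \([m^r]\) are written with \(r\) digits, and one takes those whose digits lie in \(S\), with a few extra allowed digit values to absorb carries. Then a progression with difference \(d<m^r\) is assembled digit by digit from the progressions in \(S\). This gives a set of size about \(|S|^r\,C^r\) containing \(k\)-progressions with every difference in \([m^r]\). The result is
\[
F_k(m^r)\le (C\,m^{1-c/\log\log k})^r.
\]
The \(C^r\) loss is negligible provided \(\log m\gg\log\log k\).

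\textbf{Step 3: parameter choice and padding.} Set \(D=m^r\ge nK\), choosing \(r\approx\log(nK)/\log m\); this requires \(\log m=o(\log n)\), which holds since \(\log m=K^{O(1)}\)-type bounds are far smaller than \(\log n\) only if \(\log m\) is polylogarithmic in \(K\). If it is not, I would use a cruder block with \(\log m=O(\log k\cdot\log\log k)\). Then choose \(K=\exp(c'\log n/\log\log n)\) with \(c'\) small, so the set has size at most \(n\). Finally, shift the set into the nonnegative integers, pad it with arbitrary extra integers to reach size exactly \(n\), and note that it lies in \([u]\) with \(u\le m^r\cdot O(k)=n^{1+o(1)}\), since progressions with difference at most \(nK\) and length \(K\) span only \(nK^2=n^{1+o(1)}\).

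\textbf{Main obstacle.} The hard step is Step 1 together with the carry handling in Step 2. Here I must verify that the density exponent \(1-c/\log\log k\) holds with an absolute constant \(c\), and that the modulus size and per-digit overhead depend on \(k\) only mildly, so that all losses are \(n^{o(1)}\) when \(\log k\approx\log n/\log\log n\). I would first try to read these quantities directly from Green and Ruzsa's proof, tracking the prime-selection step.
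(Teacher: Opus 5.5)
\textbf{The gap is in Step 3.} You correctly reduce the problem to a uniform-in-$k$ version of Green--Ruzsa at scale $D\approx nK$, but the rounding $D=m^r\ge nK$ does not work.

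A Green--Ruzsa block of density $k^{-O(1)}$ needs $\Theta(\log k)$ small odd primes. So its modulus satisfies $\log m=\Theta(\log k\log\log k)$. With $\log k=\Theta(\log n/\log\log n)$, this is $\Theta(\log n)$, not $o(\log n)$. (The bounds $\log m=k^{O(1)}$ you first mention would make a single block vastly larger than $nK$.) Hence $r=O(1)$, and $m^r$ can exceed $nK$ by a factor as large as $m=n^{\Theta(1)}$.

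Nothing pays for that overshoot. With $r=O(1)$ and block density $\rho=k^{-O(1)}$, your bound only saves a factor $(k\rho)^r=k^{-O(1)}=n^{-o(1)}$ relative to $D$. A polynomial overshoot therefore gives a size bound of $n^{1+\Theta(1)}$ rather than $n$, and likewise $u\le km^r$ is no longer $n^{1+o(1)}$. What is missing is a way to hit the scale $nK$ within a subpolynomial factor.

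The paper does this with a single block and no tensoring. Let $D_m$ be the product of the first $m$ odd primes and $K_m=\lfloor(3/2)^{m/5}\rfloor$. It takes $m$ minimal with $D_m>nK_m$, so that $nK<D\le p_m nK$ with $p_m=(\log n)^{O(1)}$. Then
\[
|S|\le K^2(2/3)^mD\le p_m n(2/3)^{2m/5}\le n.
\]
Alternatively, you could keep one block of modulus $m\le nK$ and multiply it by the trivial set $[0,kD_1)$, where $D_1=\lceil nK/m\rceil$. This uses $F_k(D_1D_2)\le F_k(D_1)F_k(D_2)$, realized by the set $A_1+D_1A_2$.

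Two smaller points in Steps 1--2 also need correcting.

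First, the Green--Ruzsa set is not a CRT product of per-prime sets that each contain $k$-progressions of every difference. The primes involved are far smaller than $k$, and for $p_i\le k$ such a factor must be all of $\mathbb Z_{p_i}$, so a product saves nothing. The saving instead comes from choosing $x_d\equiv d^2\pmod D$. For each \emph{fixed} $j$, the quantity $4(d^2+jd)=(2d+j)^2-j^2$ takes only $(p_i+1)/2$ values modulo every $p_i$ simultaneously. So each fixed-$j$ slice is a CRT product of half-size sets, and the union over $j\in[k]$ costs only a factor $k$.

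Second, the carries in Step 2 (equivalently, lifting a modular block to the integers) cost up to a factor of about $k$ per block, because the carry into a digit ranges over $[0,k)$. It is not an absolute constant $C$. This loss is affordable only because the block density can be pushed down to $k^{-O(1)}$ by using $O(\log k)$ more primes.
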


For completeness, we prove \cref{thm:green-ruzsa} using the main technical lemma from \cite{green-ruzsa2019arithmetic} in \Cref{app:arithmetic-kakeya}.

\paragraph{Reduction 1: From real linear hashing to modular linear hashing.}
The main technical contribution of this paper consists of two simple reductions relating the three problems described above. We begin with a surprisingly simple reduction which translates any lower bound for Problem 2 into a lower bound for Problem 1.
In fact, we prove the stronger bound on the \emph{minimum} max load for Problem 1 over all choices of random seed (as opposed to the expected max load).

\begin{theorem}
\label{thm:problem2-to-problem1}
Let \(1\leq n\leq u<p\), where \(p\) is prime, and let \(U=[u]\).
Then for every set \(X\subseteq U\),
\begin{equation}
\mathbb{E}_{\hmod_s \sim \mathcal{H}^n_p} [\maxload_X(\hmod_s)] \ge \min_{s \in \mathbb{Z}_p} \left(\maxload_X(\hmod_s)\right) \ge \min_{a \in \mathbb{T}} \left(\maxload_X(\hreal_a)\right).
\label{eq:problem2-to-problem1}
\end{equation}
\end{theorem}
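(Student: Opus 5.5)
The plan is to prove the two inequalities in \eqref{eq:problem2-to-problem1} separately. The first is immediate: an expectation of a random quantity is at least its minimum over the support. The content is in the second inequality. For it, I will show that for every \(s\in\mathbb Z_p\) there is a real seed \(a\in\mathbb T\) such that the partition of \(X\) into bins of \(\hreal_a\) \emph{refines} the partition into bins of \(\hmod_s\). In other words, any two keys sharing a real bin also share a modular bin. Then \(\maxload_X(\hmod_s)\ge\maxload_X(\hreal_a)\ge\min_{a'}\maxload_X(\hreal_{a'})\), and taking the minimum over \(s\) finishes the proof.

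\textbf{Choosing the seed.} Since \(p\) is prime and \(1\le n\le u<p\), \(n\) is invertible modulo \(p\). Set \(c\coloneqq\imod{s\,n^{-1}}{p}\) and \(a\coloneqq c/p\in[0,1)\). The case \(s=0\) is covered automatically: then \(c=0\), and both hash functions send everything to bin \(0\).

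\textbf{Describing the real bins.} For an integer key \(x\) we have \(\{ax\}=\imod{cx}{p}/p\). Writing \(y_x\coloneqq\imod{cx}{p}\), we get
\[
\hreal_a(x)=k \iff kp\le n y_x<(k+1)p .
\]

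\textbf{Describing the modular bins.} By the choice of \(c\), we have \(n y_x\equiv ncx\equiv sx\pmod p\). If \(\hreal_a(x)=k\), then \(n y_x\) lies in \([kp,(k+1)p)\), so \(\imod{sx}{p}=n y_x-kp\). Reducing modulo \(n\) gives
\[
\hmod_s(x)=\imod{n y_x-kp}{n}=\imod{-kp}{n}.
\]
This value depends only on the real bin index \(k\), not on \(x\). So each real bin is contained in a single modular bin, which is exactly the refinement claimed above.

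I do not expect any real obstacle. The only point needing care is the choice \(c=s\,n^{-1}\), which is what makes the modular bins line up with the real bins. Once that choice is made, the verification is the short computation above. Two details remain to check: that \(n<p\), so the inverse exists, and that the boundary convention \(\lfloor\cdot\rfloor\) on half-open intervals matches exactly.
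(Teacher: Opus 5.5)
Your proof is correct and is essentially the paper's. Your seed \(a=\imod{s\,n^{-1}}{p}/p\) is exactly the paper's \(a_s\), since the proof of \cref{lem:a-s-grid} shows \(a_s=\imod{s\,n^{-1}}{p}/p\). Your identity \(\hmod_s(x)=\imod{-p\,\hreal_a(x)}{n}\) is the paper's \(\imod{c\,\hmod_s(x)}{n}=\hreal_{a_s}(x)\) inverted, where \(pc\equiv-1\pmod n\). The only difference is that the paper observes the bin relabeling is a bijection, so the max loads are equal. You get this for free, since \(\gcd(p,n)=1\) makes \(k\mapsto\imod{-kp}{n}\) injective on \([n]\), though your refinement argument already suffices for the inequality.
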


We will describe in Section \ref{sec:uppers} how to extend Theorem \ref{thm:problem2-to-problem1} to show that, when $p$ is large enough, the \emph{expected} max loads for Problems 1 and 2 are essentially the same.
This will allow us to establish a connection between the known upper bounds for the two problems (namely, the known upper bound for Problem 2 \cite{konyagin-ruzsa-schlag2000} actually implies an $n^{1/3 + o(1)}$ upper bound for Problem 1, which is only slightly weaker than the state of the art upper bound for Problem 1 given in \cite{knudsen2016linear,knudsen2019linear}).

We remark that the proof of Theorem \ref{thm:problem2-to-problem1} (as well as the extended version of the reduction in Section \ref{sec:uppers}) can be viewed as a strengthening of a reduction that appeared in an earlier preprint \cite{westover2023relationship} by one of the authors of this paper.

\begin{proof}[Proof of Theorem \ref{thm:problem2-to-problem1}]
Because \(\gcd(p,n)=1\), there exists \(c\in[n]\) such that \(pc\equiv-1\pmod n\).
Define, for \(s\in[p]\),
\[
 a_s\coloneqq\left\{\frac{s(c+1/p)}{n}\right\}\in\mathbb T.
\]

We first prove the following. 

\begin{lemma}\label{lem:thm2to1-helper}
For every \(x\in U\) and every $s \in [p]$,
\begin{equation}
 \imod{c\,\hmod_s(x)}{n}=\hreal_{a_s}(x).
 \label{eq:problem2-to-problem1-2}
\end{equation}
\end{lemma}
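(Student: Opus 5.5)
The plan is a direct computation. I expand \(a_s x\) by writing \(sx\) as quotient and remainder modulo \(p\), and then further reducing the remainder modulo \(n\). The defining congruence \(pc\equiv -1 \pmod n\) should kill the contribution of the quotient modulo \(1\). What remains is a term \(j/n\) with \(j=\imod{c\,\hmod_s(x)}{n}\), plus an error term lying in \([0,1/n)\). Taking fractional parts and then \(\lfloor n\cdot\rfloor\) should recover exactly \(j\).

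Concretely, fix \(x\in U\) and \(s\in[p]\). Write \(sx=qp+r\) with \(r=\imod{sx}{p}\in[p]\), and then \(r=mn+h\) with \(h=\hmod_s(x)\in[n]\). Since \(x\) is an integer and \(a_s\) differs from \(s(c+1/p)/n\) by an integer, I get \(\{a_s x\}=\{sx(c+1/p)/n\}\). I then compute
\[
\frac{sx(c+1/p)}{n}=\frac{sxc}{n}+\frac{qp+r}{pn}=\frac{sxc+q}{n}+\frac{r}{pn}.
\]
The numerator satisfies \(sxc+q=(qp+r)c+q=q(pc+1)+rc\). Because \(n\mid pc+1\), this is congruent to \(rc\) modulo \(n\). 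In turn \(rc=mnc+hc\equiv hc\pmod n\), so \(sxc+q\equiv j \pmod n\), where \(j\coloneqq\imod{ch}{n}\). Hence
\[
\frac{sx(c+1/p)}{n}\equiv \frac{j}{n}+\frac{r}{pn}\pmod 1 .
\]

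Finally, \(0\le r<p\) gives \(0\le r/(pn)<1/n\). So \(j/n+r/(pn)\) lies in \([j/n,(j+1)/n)\subseteq[0,1)\), and it is therefore exactly the fractional part \(\{a_sx\}\). Multiplying by \(n\) and taking the floor yields \(\hreal_{a_s}(x)=j=\imod{c\,\hmod_s(x)}{n}\), which is \eqref{eq:problem2-to-problem1-2}.

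There is no real obstacle here. The only delicate step is the mod-\(1\) bookkeeping: I must confirm that the integer parts discarded along the way are genuinely integers. This relies on \(x\in\mathbb Z\), which justifies replacing \(a_s\) by its unreduced value. It also relies on \(q(pc+1)/n\in\mathbb Z\) and \(mc\in\mathbb Z\). In addition, the leftover \(r/(pn)\) must be strictly less than \(1/n\), so that it never pushes the value into the next bin; this is exactly where \(r<p\) is used.
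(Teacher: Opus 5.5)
Your proof is correct and is essentially the paper's argument: both write \(sx=p\lfloor sx/p\rfloor+\imod{sx}{p}\), use \(pc\equiv-1\pmod n\) to get \(csx+\lfloor sx/p\rfloor\equiv c\,\imod{sx}{p}\pmod n\), and use \(\imod{sx}{p}<p\) to ensure the leftover fraction never crosses a bin boundary. The only difference is presentation: the paper compares integers modulo \(n\) via \(\lfloor n\{\alpha x\}\rfloor=\imod{\lfloor n\alpha x\rfloor}{n}\), while you compute the fractional part exactly as \(\{a_s x\}=j/n+r/(pn)\).
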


\begin{proof} 
For every real \(\alpha\) and integer \(x\),
\[
 \hreal_{\{\alpha\}}(x)
 = \left\lfloor n\{\alpha x\}\right\rfloor
 =\imod{\lfloor n\alpha x\rfloor}{n},
\]
where the latter equality holds since $\lfloor n \alpha x \rfloor =n\lfloor \alpha x \rfloor + \lfloor n \{\alpha x\} \rfloor$.
Applying this with \(\alpha=s(c+1/p)/n\),
and since \(csx\) is an integer, we have 
\[
 \hreal_{a_s}(x)
 =\imod{csx+\lfloor sx/p\rfloor}{n}.
\]
On the other hand, \(\imod{sx}{p}=sx-p\lfloor sx/p\rfloor\), and \(-pc\equiv1\pmod n\).
Consequently,
\begin{align*}
 \imod{c\,\hmod_s(x)}{n}
  =\imod{c\,\imod{\imod{sx}{p}}{n}}{n}
 =\imod{c\,\imod{sx}{p}}{n}
 = \imod{csx-pc\lfloor sx/p\rfloor}{n}
 =\imod{csx+\lfloor sx/p\rfloor}{n}
 =\hreal_{a_s}(x).
\end{align*}
\end{proof}

With \cref{lem:thm2to1-helper}, it remains to show \eqref{eq:problem2-to-problem1}.
Note that, since \(pc\equiv-1\pmod n\), we are guaranteed that \(\gcd(c,n)=1\), and thus that multiplication by \(c\) is a bijection on \(\mathbb{Z}_n\). Thus,
\[
\maxload_X(\hmod_s) = \maxload_X(c \hmod_s).
\]
Together with \eqref{eq:problem2-to-problem1-2} this shows
\[
\maxload_X(\hmod_s) = \maxload_X(\hreal_{a_s}),
\]
and the theorem follows.
\end{proof}

We remark that this first reduction is already enough to obtain a non-trivial lower bound for Problem 1. Indeed, the lower bound in Theorem \ref{thm:krs-bounds} combined with Theorem \ref{thm:problem2-to-problem1} immediately implies:

\begin{corollary}[A first lower bound for Problem 1]
\label{cor:first-lower-bound}
For every sufficiently large \(n\), there exists a universe size \(u=n^{1+o(1)}\) such that, for every prime \(p>u\), there exists a set \(X\subseteq[u]\) of size \(n\) such that
    \[
\mathbb{E}_{\hmod_s \sim \mathcal{H}^n_p} [\maxload_X(\hmod_s)] \ge \min_{s \in \mathbb{Z}_p} \left(\maxload_X(\hmod_s)\right) \ge \exp(\Omega(\log n / (\log \log n)^2)).
\]
\end{corollary}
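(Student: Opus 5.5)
This corollary follows by composing the two preceding results. Fix a sufficiently large \(n\) and let \(u=u(n)=n^{1+o(1)}\) be the universe size supplied by \Cref{thm:krs-bounds}. Note that this \(u\) depends only on \(n\), not on \(p\). By the definition of \(K(n,u)\) in \eqref{eq:krs-quantity} as a maximum over \(n\)-element subsets of \([u]\), we can choose a set \(X\subseteq[u]\) with \(|X|=n\) such that
\[
\min_{a\in\mathbb T}\maxload_X(\hreal_a)=K(n,u)\ge\exp\bigl(\Omega(\log n/(\log\log n)^2)\bigr).
\]
A maximizer exists because there are only finitely many candidate sets \(X\). The inner minimum over \(a\in\mathbb T\) is attained because \(\maxload_X(\hreal_a)\) takes only finitely many integer values. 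If one prefers not to rely on attainment, an infimum works equally well, since the inequalities below only need lower bounds.

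Now fix any prime \(p>u\). We have \(1\le n\le u<p\), so the hypotheses of \Cref{thm:problem2-to-problem1} hold for this \(X\subseteq U=[u]\). Applying \eqref{eq:problem2-to-problem1} gives
\[
\mathbb E_{\hmod_s\sim\mathcal H_p^n}[\maxload_X(\hmod_s)]\ \ge\ \min_{s\in\mathbb Z_p}\maxload_X(\hmod_s)\ \ge\ \min_{a\in\mathbb T}\maxload_X(\hreal_a).
\]
The right-hand side is at least \(\exp(\Omega(\log n/(\log\log n)^2))\) by the choice of \(X\), which is exactly the claimed chain of inequalities.

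There is no real obstacle here. The only points requiring care are the order of quantifiers and the fact that \(X\) must be independent of \(p\). The statement asks for \(u\) first, and then, for every prime \(p>u\), a set \(X\). Our \(X\) is chosen from \Cref{thm:krs-bounds} before \(p\) is fixed, so it works uniformly for all \(p\), which is slightly stronger than required. The constant hidden in \(\Omega(\cdot)\) is the one from \Cref{thm:krs-bounds}, and it is unaffected by the reduction because \Cref{thm:problem2-to-problem1} loses nothing: under the correspondence \(s\mapsto a_s\), the two max loads are equal.
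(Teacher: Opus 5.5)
Your proposal is correct and is the same argument as the paper, which obtains the corollary by taking the set from \cref{thm:krs-bounds} and applying \cref{thm:problem2-to-problem1}. Your remarks on attainment of the minimum and on choosing \(X\) independently of \(p\) are valid and slightly strengthen the stated order of quantifiers.
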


\paragraph{Reduction 2: From arithmetic Kakeya to real linear hashing.}
Our second reduction shows that small arithmetic Kakeya sets immediately imply lower bounds for Problem 2.
Specifically, the existence of an arithmetic Kakeya set $A$ gives a lower bound on the max load for \emph{any} $n$ element set containing $A$ with \emph{any} random seed.
\begin{theorem}
\label{thm:progressions-force-load}
Let \(A\) be a set of at most \(n\) integers containing a \(k\)-term arithmetic progression with every positive common difference \(d\in[D]\).
If \(D\geq kn\), then every \(n\)-element set \(X\supseteq A\) and every \(a\in\mathbb T\) satisfy
\[
 \maxload_X(\hreal_a)\geq\frac{k}{2}.
\]
\end{theorem}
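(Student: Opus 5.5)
Fix \(a\in\mathbb T\). Since \(X\supseteq A\), it suffices to exhibit \(\lceil k/2\rceil\) elements of \(A\) that land in the same bin under \(\hreal_a\). The plan has two steps. First use simultaneous Diophantine (Dirichlet) approximation to find a difference \(d\in[D]\) with \(ad\) very close to an integer. Then place many consecutive terms of the progression with that difference into a single bin.

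\emph{Step 1 (Dirichlet).} Consider the \(D+1\) points \(\{ad'\}\) for \(d'=0,1,\dots,D\) in \(\mathbb T\). Partition \(\mathbb T\) into \(D\) arcs of length \(1/D\). By pigeonhole, two of the points lie in the same arc. Taking their difference gives some \(d\in\{1,\dots,D\}\) with \(\|ad\|_{\mathbb T}\le 1/D\le 1/(kn)\), where \(\|\cdot\|_{\mathbb T}\) is distance to the nearest integer. (We use the convention that \([D]\) contains the differences \(1,\dots,D\); if \([D]=\{0,\dots,D-1\}\), run the same pigeonhole with \(D\) points and \(D-1\) arcs. The resulting slack is absorbed by the factor \(2\) in \(k/2\) or by a harmless adjustment.) Write \(ad=m+\varepsilon\) with \(m\in\mathbb Z\) and \(|\varepsilon|\le 1/(kn)\).

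\emph{Step 2 (the progression stays in one or two bins).} Let \(b,b+d,\dots,b+(k-1)d\) be the \(k\)-term progression in \(A\) with difference \(d\). Modulo \(1\), we have \(a(b+jd)\equiv ab+j\varepsilon\). As \(j\) runs over \([k]\), the values \(ab+j\varepsilon\) lie in an interval of length \((k-1)|\varepsilon|<1/n\). Such an interval of \(\mathbb T\), possibly wrapping around \(0\), meets at most two of the bins \([i/n,(i+1)/n)\). The wraparound is harmless: bins \(n-1\) and \(0\) are adjacent arcs of \(\mathbb T\), so the count is still two. Hence the \(k\) keys \(b+jd\), which are distinct since \(d\ge1\), are split among at most two values of \(\hreal_a\). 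One bin therefore receives at least \(\lceil k/2\rceil\ge k/2\) of them, giving \(\maxload_X(\hreal_a)\ge k/2\).

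\emph{Main obstacle.} The only delicate point is Step 1: we need \(d\) to be a \emph{positive} difference in the allowed range with \(\|ad\|\) of order \(1/(kn)\). Dirichlet's pigeonhole delivers exactly this, and the hypothesis \(D\ge kn\) is what makes the drift \((k-1)|\varepsilon|\) stay below one bin width. The remaining steps are the boundary bookkeeping: the off-by-one in \([D]\) and the two-bin split, which costs the factor \(2\).
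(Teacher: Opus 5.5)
Your proof is correct and follows the paper's argument: a Dirichlet pigeonhole gives a positive \(d\) with \(\|ad\|_{\mathbb T}\le 1/D\), and the progression with that difference then lies in an arc of length \((k-1)/D<1/n\), which meets at most two bins. On the off-by-one (the paper's convention is \([D]=\{0,\dots,D-1\}\)), the paper loses nothing: the \(D\) points \(ra\), \(r\in[D]\), cut the circle into \(D\) gaps summing to \(1\), so some gap is at most \(1/D\) and \(d\le D-1\); your \(D-1\)-arc fallback also works, but the reason is \(D-1\ge (k-1)n\), not the factor \(2\).
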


\begin{proof}
The claim is immediate when \(k=1\), so assume \(k\geq2\).
Fix \(X\supseteq A\) and \(a\in\mathbb T\).
Among the \(D\) points \(ra\), for \(r\in[D]\), there must exist two points \(r_1a\) and \(r_2a\), with \(r_1<r_2\), whose circular distance is at most \(1/D\).
Subtracting their indices gives a positive \(d=r_2-r_1\in[D]\) such that
\[
 \|ad\|_{\mathbb T}
 \coloneqq
 \min_{z\in\mathbb Z}|ad-z|
 \leq
 \frac{1}{D}.
\]
By assumption, $A$ contains a $k$-term arithmetic progression with difference $d$, write this as \(x,x+d,\ldots,x+(k-1)d\).
Multiplying the progression by \(a\) and examining the results modulo \(1\) gives a sequence \(\{ax\},\{ax+ad\},\ldots,\{ax+(k-1)ad\}\in\mathbb T\) that lies entirely in a circular interval of length at most
\[
 (k-1)\|ad\|_{\mathbb T}
 \leq
 \frac{k-1}{D}
 <
 \frac{1}{n}.
\]
Such an interval meets at most two of the \(n\) hash bins, so one of those bins contains at least \(k/2\) terms of the progression.
\end{proof}

\paragraph{The final lower bounds. }Finally, we can feed the arithmetic Kakeya sets of \cref{thm:green-ruzsa} through our two reductions in order to extract lower bounds of \(\exp(\Omega(\log n/\log\log n))\) for each of Problems 2 and 1.
For Problem 2, this improves the lower bound of \(\exp(\Omega(\log n/(\log\log n)^2))\) from \cref{thm:krs-bounds}; for Problem 1, it improves \cref{cor:first-lower-bound}, and therefore also the \(\Omega(\log n/\log\log n)\) bound that holds for every hash family.

\begin{corollary}[Lower bound for Problem 2]
\label{cor:problem2-lower-bound}
For every sufficiently large \(n\), there exists a universe size \(u = n^{1 + o(1)}\) and a set \(X \subseteq [u]\) of size \(n\) such that every \(a \in \mathbb{T}\) satisfies
\[
 \maxload_X(\hreal_a)
 \geq
 \exp\left(\Omega\left(\frac{\log n}{\log\log n}\right)\right).
\]
In particular, for this choice of \(u\), the quantity \(K(n, u)\) given by \eqref{eq:krs-quantity} (the best-case max load over all seeds) satisfies \(K(n, u) \geq \exp(\Omega(\log n / \log\log n))\).
\end{corollary}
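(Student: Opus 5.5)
The plan is to combine \cref{thm:green-ruzsa} with \cref{thm:progressions-force-load}; no new ideas are needed.

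First, fix a sufficiently large \(n\) and apply \cref{thm:green-ruzsa}. This gives an integer \(K=\exp(\Theta(\log n/\log\log n))\), a universe size \(u=n^{1+o(1)}\), and a set \(X\subseteq[u]\) with \(|X|=n\). The set \(X\) contains a \(K\)-term arithmetic progression with every positive common difference \(d\in[nK]\).

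Next, apply \cref{thm:progressions-force-load} with \(A\coloneqq X\), \(k\coloneqq K\), and \(D\coloneqq nK\). The hypotheses hold:
\begin{itemize}
\item \(A\) has at most \(n\) elements;
\item \(A\) contains a \(k\)-term progression with every difference in \([D]\);
\item \(D=nK\geq kn\).
\end{itemize}
Since \(X\supseteq A\) trivially and \(|X|=n\), the theorem gives, for every \(a\in\mathbb T\),
\[
\maxload_X(\hreal_a)\geq \frac{K}{2}=\exp\bigl(\Theta(\log n/\log\log n)\bigr)/2 .
\]
The factor \(1/2\) is absorbed into the \(\Omega\) in the exponent. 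This yields the first claim.

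For the second claim, recall that \(K(n,u)\) is the maximum over \(n\)-element \(X\subseteq[u]\) of \(\min_{a}\maxload_X(\hreal_a)\). So it is at least the value attained by this particular \(X\). One point to check is that the minimum over \(a\in\mathbb T\) is well defined. Each \(\maxload_X(\hreal_a)\) is a positive integer, so the infimum is attained. In any case, the pointwise bound already bounds the infimum.

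There is no real obstacle here. The only step needing care is matching parameters: \cref{thm:green-ruzsa} supplies differences up to exactly \(nK\), which is precisely the \(D\geq kn\) threshold that \cref{thm:progressions-force-load} requires.
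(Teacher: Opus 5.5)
Your proposal is correct and follows the paper's own proof: invoke \cref{thm:green-ruzsa}, then apply \cref{thm:progressions-force-load} with \(A=X\), \(k=K\), and \(D=nK\), and absorb the factor \(1/2\) into the exponent. Your added remarks, that \(K(n,u)\) dominates the value at this particular \(X\) and that the minimum over \(a\) is attained, are correct and harmless.
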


\begin{proof}
Let \(n\) be sufficiently large, let \(K = \exp(\Theta(\log n / \log\log n))\), and let \(X \subseteq [u]\) with \(u = n^{1 + o(1)}\) be the key set supplied by \cref{thm:green-ruzsa}.
That is, \(|X|=n\), and \(X\) contains a \(K\)-term arithmetic progression with every positive common difference \(d\in[nK]\).
Applying \cref{thm:progressions-force-load} with \(A = X\), with \(k = K\), and with \(D = nK\) gives the claimed result.
\end{proof}

\begin{corollary}[Lower bound for Problem 1]
\label{cor:problem1-lower-bound}
For every sufficiently large \(n\), there exists a universe size \(u = n^{1 + o(1)}\) and a set \(X \subseteq [u]\) of size \(n\) such that, for every prime \(p > u\),
\[
    \mathbb{E}_{\hmod_s \sim \mathcal{H}^n_p}[\maxload_X(\hmod_s)] \ge \min_{s \in \mathbb{Z}_p} \left(\maxload_X(\hmod_s)\right)
 \geq
 \exp\left(\Omega\left(\frac{\log n}{\log\log n}\right)\right).
\]
\end{corollary}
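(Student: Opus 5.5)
The plan is to chain the two reductions already established, using the key set produced by \cref{cor:problem2-lower-bound}. For sufficiently large \(n\), \cref{cor:problem2-lower-bound} (built on \cref{thm:green-ruzsa} and \cref{thm:progressions-force-load}) gives a universe size \(u=n^{1+o(1)}\) and a set \(X\subseteq[u]\) with \(|X|=n\) such that
\[
 \min_{a\in\mathbb T}\maxload_X(\hreal_a)\geq\exp\left(\Omega\left(\frac{\log n}{\log\log n}\right)\right).
\]
A key point is that \(X\) and \(u\) are chosen before any prime is fixed. The construction in \cref{thm:green-ruzsa} and \cref{cor:problem2-lower-bound} refers only to real linear hashing and never to \(p\). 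The order of quantifiers in the statement (there exist \(u\) and \(X\) such that for every prime \(p>u\)) is therefore respected automatically.

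Next, fix an arbitrary prime \(p>u\). Since \(1\leq n\leq u<p\), the hypotheses of \cref{thm:problem2-to-problem1} hold. Applying it to this \(X\) gives
\[
 \mathbb E_{\hmod_s\sim\mathcal H_p^n}[\maxload_X(\hmod_s)]\geq\min_{s\in\mathbb Z_p}\maxload_X(\hmod_s)\geq\min_{a\in\mathbb T}\maxload_X(\hreal_a).
\]
Combining this with the previous display yields the claim.

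I expect essentially no obstacle. The only things to check are:
\begin{itemize}
\item the condition \(n\leq u\), which holds since \(X\subseteq[u]\) has \(n\) elements;
\item that the implicit constant in \(\Omega(\cdot)\) is inherited from \cref{cor:problem2-lower-bound} and does not depend on \(p\);
\item that the step \(\mathbb E\geq\min\) is trivial.
\end{itemize}
If the result were instead wanted for the hash family with a nonzero shift \(t\), the factor-two comparison from the footnote on \(t\) would cost only a constant, which is absorbed into the \(\Omega\).
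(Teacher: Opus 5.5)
Your proposal is correct and is the same argument as the paper's proof: take \(u\) and \(X\) from \cref{cor:problem2-lower-bound}, observe \(n\le u<p\) so \cref{thm:problem2-to-problem1} applies for every prime \(p>u\), and chain the inequalities. Your remark that \(X\) and \(u\) are constructed without reference to \(p\), so the quantifier order holds, is exactly what the paper uses implicitly.
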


\begin{proof}
Let \(u\) and \(X\) be as in \cref{cor:problem2-lower-bound}, and let \(p > u\) be prime.
Since \(X \subseteq [u]\) consists of \(n\) distinct elements, we have \(u \geq n\) and hence \(n < p\), so \cref{thm:problem2-to-problem1} applies to \(X\) and gives
\[
 \mathbb{E}_{\hmod_s \sim \mathcal{H}^n_p}[\maxload_X(\hmod_s)]
 \ge \min_{s \in \mathbb{Z}_p}\left(\maxload_X(\hmod_s)\right)
 \ge \min_{a \in \mathbb{T}}\left(\maxload_X(\hreal_a)\right).
\]
By \cref{cor:problem2-lower-bound}, the right-hand side is at least \(\exp(\Omega(\log n / \log\log n))\).
\end{proof}

One surprising feature of our lower bound for Problem 1 is that it holds not just in expectation (for a random seed \(s\)), but for every seed \(s\). That is, for every $s \in \mathbb{Z}_p$, the maximum load of $\hmod_s$ is at least $\exp(\Omega(\log n / \log\log n))$. Thus modular linear hashing is exponentially worse than fully random hashing, whose maximum load is \(\Theta(\log n / \log\log n)\) with high probability.

\section{A Stronger Connection Between Problems 1 and 2}
\label{sec:uppers}

We now extend the connection between Problems 1 and 2 to show that, when $p$ is large enough, the expected max loads for modular and real linear hashing are essentially the same.

\begin{theorem}
\label{thm:stronger-connection}
Let \(0 < n \le u\), and let \(p > n^2 u\) be prime.
Then
\[
 \left|
 \max_{X\subseteq U,\,|X|=n}
 \mathbb E_{\hmod_s\sim\mathcal H^n_p}\left[\maxload_X(\hmod_s)\right]
 -
 \max_{X\subseteq U,\,|X|=n}
 \mathbb E_{\hreal_a\sim\mathcal H^n_{\mathbb T}}\left[\maxload_X(\hreal_a)\right]
 \right|
 \leq
 1.
\]
\end{theorem}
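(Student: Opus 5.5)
The plan is to prove the stronger statement that for every fixed key set \(X\subseteq U\) with \(|X|=n\), the two expectations \(\mathbb E_s[\maxload_X(\hmod_s)]\) and \(\mathbb E_a[\maxload_X(\hreal_a)]\) differ by at most \(1\). Taking the maximum over \(X\) then changes each side by the same pointwise-bounded amount, which gives the theorem.

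\emph{Step 1: the modular expectation is a grid average of the real one.} Reuse the constant \(c\) and the points \(a_s\) from the proof of \cref{thm:problem2-to-problem1}. By \cref{lem:thm2to1-helper}, together with the fact that multiplication by \(c\) is a bijection on \(\mathbb Z_n\), we have \(\maxload_X(\hmod_s)=\maxload_X(\hreal_{a_s})\). Since \(pc+1\equiv 0\pmod n\), we can write \(pc+1=nm\) for an integer \(m\). This gives
\[
a_s=\left\{\frac{s(pc+1)}{pn}\right\}=\left\{\frac{sm}{p}\right\}.
\]
Because \(nm\equiv 1\pmod p\), we have \(\gcd(m,p)=1\), so as \(s\) ranges over \([p]\) the point \(a_s\) ranges bijectively over the grid \(G=\{j/p: j\in[p]\}\). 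Hence \(\mathbb E_s[\maxload_X(\hmod_s)]=\mathbb E_{a'\in G}[\maxload_X(\hreal_{a'})]\), where \(a'\) is uniform on \(G\).

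\emph{Step 2: coupling the grid with the continuum.} Write a uniform \(a\in\mathbb T\) as \(a=a'+\varepsilon\), where \(a'\) is uniform on \(G\) and \(\varepsilon\) is uniform on \([0,1/p)\), drawn independently. For each key \(x\in X\), the point \(\{ax\}\) is obtained from \(\{a'x\}\) by a forward shift of \(x\varepsilon<u/p<1/n\). So the key \(x\) changes bin only if \(\{a'x\}\) lies within distance \(x/p\) below one of the \(n\) bin boundaries \(i/n\).

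Moving a single key changes every bin load by at most \(1\). Therefore
\[
|\maxload_X(\hreal_a)-\maxload_X(\hreal_{a'})|\le N(a',\varepsilon),
\]
where \(N(a',\varepsilon)\) is the number of keys that switch bins. It thus suffices to show \(\mathbb E[N]\le 1\), which we bound using only the randomness in \(a'\).

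Fix \(x\in X\) with \(x\ge1\); the key \(x=0\) never moves. Since \(0<x<p\), the map \(j\mapsto jx \bmod p\) is a bijection, so \(\{a'x\}\) is uniform on \(G\). Each of the \(n\) windows \([i/n-x/p,\,i/n)\) contains at most \(x+1\le u\) grid points. Hence
\[
\Pr[x\text{ moves}]\le \frac{nu}{p},
\qquad\text{and so}\qquad
\mathbb E[N]\le \frac{n^2u}{p}<1,
\]
using the hypothesis \(p>n^2u\). Taking expectations of the pointwise inequality gives \(|\mathbb E_a[\maxload_X(\hreal_a)]-\mathbb E_{a'}[\maxload_X(\hreal_{a'})]|\le 1\). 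Combined with Step 1, this is the per-\(X\) claim.

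\emph{Step 3 and the main obstacle.} Since the per-\(X\) difference is at most \(1\), the maxima over \(X\) of the two expectations also differ by at most \(1\), which is the statement of \cref{thm:stronger-connection}.

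The delicate step is Step 2. A naive count of "bad" grid cells containing some breakpoint of \(a\mapsto\maxload_X(\hreal_a)\), with each bad cell charged the worst-case error \(n\), loses a factor of \(n\). The fix is to charge each moving key only \(1\) and to use the equidistribution of \(\{a'x\}\) on the grid. Some care is needed with the boundary convention for the windows (half-open intervals), so that the count of at most \(x+1\) grid points per window is exact.
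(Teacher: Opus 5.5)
Your proposal is correct and is essentially the paper's proof. Step 1 is exactly the grid lemma (\cref{lem:a-s-grid}). Your coupling \(a=a'+\varepsilon\), which charges each key that changes bin a cost of \(1\), is a probabilistic rephrasing of the paper's bound on the left-Riemann-sum error of \(a\mapsto\maxload_X(\hreal_a)\) on the \(1/p\)-mesh by its total variation divided by \(p\). The one cosmetic difference is bookkeeping: you bound each key's crossing probability via equidistribution of \(\{a'x\}\) on the grid (using \(\gcd(x,p)=1\)), whereas the paper directly counts the breakpoints \(k/(nx)\) of each key. Both arguments end with an error of at most about \(n\sum_{x\in X}x/p\le n^2u/p<1\).
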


The rest of the section proves \cref{thm:stronger-connection}, in two steps.
The proof of \Cref{thm:problem2-to-problem1} already pairs each \(s\in[p]\) with a value \(a_s\in\mathbb T\) for which \(\hmod_s\) and \(\hreal_{a_s}\) induce the same max load, so the first step is to determine which \(a\in\mathbb T\) arise as some \(a_s\).
The next lemma shows that, as \(s\) ranges over \([p]\), the value \(a_s\) ranges over an evenly spaced grid of \(p\) points in \(\mathbb T\); the expected modular max load is therefore exactly the average of the real max load over this grid.
The second step is to bound the difference between that grid average and the average of the real max load over all of \(\mathbb T\).

\begin{lemma}
\label{lem:a-s-grid}
Let \(c\in [n]\) such that \(pc\equiv-1\pmod n\), and for \(s\in[p]\), let
$a_s=\left\{\frac{s(c+1/p)}{n}\right\}$.
Then \(s\mapsto a_s\) is a bijection from \([p]\) to the \(p\)-point grid
\[
 \left\{\frac{j}{p}:j\in[p]\right\}\subset\mathbb T.
\]
In particular, a uniformly random $s$ induces the uniform distribution on this grid.
\end{lemma}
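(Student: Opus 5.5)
The plan is to compute \(a_s\) explicitly as a multiple of \(1/p\) and then check that the resulting map on residues is invertible. Since \(pc\equiv-1\pmod n\), the integer \(pc+1\) is divisible by \(n\). Write \(pc+1=nm\) for a positive integer \(m\). Then
\[
 \frac{s(c+1/p)}{n}=\frac{s(pc+1)}{pn}=\frac{sm}{p},
 \qquad\text{so}\qquad
 a_s=\left\{\frac{sm}{p}\right\}=\frac{\imod{sm}{p}}{p}.
\]
This shows immediately that each \(a_s\) lies in the grid \(\{j/p:j\in[p]\}\). It also identifies the map \(s\mapsto a_s\) with \(s\mapsto \imod{sm}{p}\) followed by the obvious identification \(j\mapsto j/p\).

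It remains to show that \(s\mapsto\imod{sm}{p}\) is a bijection on \([p]\). Since \(p\) is prime, it suffices that \(p\nmid m\). We have \(nm=pc+1\equiv1\pmod p\), so \(m\) is invertible modulo \(p\); in fact \(m\equiv n^{-1}\pmod p\). Here \(p\nmid n\) because \(n\le u<p\). Multiplication by a unit of \(\mathbb Z_p\) permutes \([p]\), so \(s\mapsto a_s\) is a bijection onto the grid.

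The final claim follows at once. A bijection pushes the uniform measure on \([p]\) to the uniform measure on the \(p\)-point grid.

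There is no serious obstacle here. The only step needing care is the observation that \(c+1/p\) over \(n\) simplifies to the rational number \(m/p\) with integer \(m\), using the defining congruence for \(c\). After that, the argument is just invertibility of \(n\) modulo the prime \(p\).
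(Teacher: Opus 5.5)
Your proof is correct and matches the paper's argument: the paper likewise sets \(t=(pc+1)/n\in\mathbb Z\), rewrites \(a_s=\imod{st}{p}/p\), and uses \(nt\equiv 1\pmod p\) to conclude that multiplication by \(t\) permutes \([p]\). Your side remark that \(p\nmid n\) is harmless but unnecessary, since \(nm\equiv1\pmod p\) already makes \(m\) a unit modulo \(p\).
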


\begin{proof}
Let \(t = (pc+1)/n\in\mathbb Z\).
Then
\[
 a_s
 =\left\{\frac{s(c+1/p)}{n}\right\}
 =\left\{\frac{st}{p}\right\}
 =\frac{\imod{st}{p}}{p}.
\]
Since \(nt=pc+1\equiv1\pmod p\), we have \(\gcd(t,p)=1\).
Thus multiplication by \(t\) is a bijection on the integers modulo \(p\), and \(s\mapsto\imod{st}{p}\) is a permutation of \([p]\).
\end{proof}

With \cref{lem:a-s-grid} in hand, we can prove the following connection between real and modular expected max loads.

\begin{proposition}
\label{prop:expected-maxload}
Let $0 < n \le u < p$, where $p$ is a prime. Fix a key set \(X\subseteq U\), and let \(m=|X|\).
Let \(S\) be uniform on \([p]\), and let \(A\) be uniform on \(\mathbb T\).
Then
\[
 \left|
 \mathbb E_S\!\left[\maxload_X(\hmod_S)\right]
 -
 \mathbb E_A\!\left[\maxload_X(\hreal_A)\right]
 \right|
 \leq
 \frac{n}{p}\sum_{x\in X}x
 \leq
 \frac{n(u-1)m}{p}.
\]
\end{proposition}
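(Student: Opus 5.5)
By the proof of \Cref{thm:problem2-to-problem1}, $\maxload_X(\hmod_s)=\maxload_X(\hreal_{a_s})$ for every $s\in[p]$. By \Cref{lem:a-s-grid}, $a_s$ is uniform on the grid $\{j/p: j\in[p]\}$. Write $f(a)\coloneqq\maxload_X(\hreal_a)$. Then
\[
 \mathbb E_S[\maxload_X(\hmod_S)]=\frac1p\sum_{j\in[p]}f(j/p),
 \qquad
 \mathbb E_A[f(A)]=\sum_{j\in[p]}\int_{j/p}^{(j+1)/p}f(a)\,da .
\]
The difference is therefore at most $\sum_j\int_{j/p}^{(j+1)/p}|f(a)-f(j/p)|\,da$. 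The plan is to bound this by counting how often $f$ can change inside each grid cell.

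The key observation is that $f$ is piecewise constant, and changes only at values of $a$ where some key crosses a bin boundary. Key $x$ lies in bin $\lfloor n\{ax\}\rfloor$, which changes exactly when $nax\in\mathbb Z$, i.e.\ $a\in\frac{1}{nx}\mathbb Z$. Hence, as $a$ ranges over $\mathbb T$, key $x$ crosses a boundary exactly $nx$ times (key $0$ never moves). When a single key moves, each bin's load changes by at most $1$, so $f$ changes by at most $1$. If several keys cross at the same $a$, I charge the change to them one at a time, since $\maxload$ is $1$-Lipschitz under moving one element. It follows that for $a\in[j/p,(j+1)/p)$,
\[
|f(a)-f(j/p)|\le N_j,
\]
where $N_j$ is the number of pairs $(x,\beta)$ with $\beta\in(j/p,(j+1)/p)$ a crossing point of $x$.

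Integrating over each cell of length $1/p$ and summing over $j$ bounds the difference by
\[
\frac1p\sum_j N_j\le\frac1p\sum_{x\in X}nx .
\]
The second inequality in the proposition follows from $x\le u-1$ and $|X|=m$.

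The main obstacle is the bookkeeping at endpoints. Because $\hreal_a$ uses a floor, $f$ is right-continuous in the appropriate sense, and the grid points themselves may coincide with crossing points, so I must check that each crossing is counted in at most one cell. A crossing at $a=j/p$ exactly is already reflected in $f(j/p)$ and contributes nothing to cell $j$. A crossing strictly inside $(j/p,(j+1)/p)$ is counted once. The wrap-around at $a=1\equiv 0$ is handled because the total number of crossings of $x$ over a full period is exactly $nx$. Up to this care, the argument is a direct Riemann-sum comparison.
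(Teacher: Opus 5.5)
Your proposal is correct and takes essentially the same route as the paper. Both arguments identify the modular expectation with the left Riemann sum of $f(a)=\maxload_X(\hreal_a)$ on the $1/p$ grid via \cref{lem:a-s-grid}, bound each cell's error by the jumps of this right-continuous step function inside the cell, and bound the total number of key crossings by $n\sum_{x\in X}x$. Your crossing count $N_j$ plays the role of the paper's per-cell variation $V_j$, and counting $nx$ crossings per key (including the one at $a=0$) is a harmless overcount of the paper's $nx-1$.
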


\begin{proof}
Define
\[
 L_X(a)\coloneqq\maxload_X(\hreal_a),
 \qquad a\in[0,1).
\]
For every \(s\in[p]\), the proof of \cref{thm:problem2-to-problem1} shows that \(\hreal_{a_s}\) agrees with \(\hmod_s\) up to a permutation of the bins, so \(\maxload_X(\hmod_s)=L_X(a_s)\).
By \cref{lem:a-s-grid}, the values \(a_s\) range bijectively over \(\{j/p:j\in[p]\}\), so averaging over \(s\) gives
\begin{equation}
 \mathbb E_S\!\left[\maxload_X(\hmod_S)\right]
 =
 \frac{1}{p}\sum_{j=0}^{p-1}L_X\!\left(\frac{j}{p}\right).
 \label{eq:grid-average}
\end{equation}
We call the right-hand side of \eqref{eq:grid-average} the \defn{grid average} of \(L_X\).
Since \(A\) is uniform on \(\mathbb T\), we likewise have
\begin{equation}
 \mathbb E_A\!\left[\maxload_X(\hreal_A)\right]
 =
 \int_0^1 L_X(a)\,da,
 \label{eq:continuous-average}
\end{equation}
and we call the right-hand side of \eqref{eq:continuous-average} the \defn{continuous average} of \(L_X\).
It remains to bound the difference between \eqref{eq:grid-average} and \eqref{eq:continuous-average}.

Recall that the \defn{total variation} of a step function is the sum of the magnitudes of its jumps.
The function \(L_X\) is a right-continuous step function.
For each \(x\in X\), the bin \(\hreal_a(x)\) changes only when \(a=k/(nx)\) for an integer \(k\) with \(1\leq k<nx\).
If \(r\) keys change bins at the same value of \(a\), then the maximum load changes by at most \(r\).
Hence the total variation of \(L_X\) is at most
\[
 n\sum_{x\in X}x.
\]

We claim that, for any right-continuous step function of total variation \(V\), the error of its left Riemann sum on a mesh of width \(1/p\) is at most \(V/p\).
To see this, let \(f\) be such a function on \([0,1)\), and let \(I_j\coloneqq[j/p,(j+1)/p)\) for \(j\in[p]\) denote the mesh intervals, so that the left Riemann sum in question is \(\frac{1}{p}\sum_{j=0}^{p-1}f(j/p)\).
Write \(V_j\) for the sum of the magnitudes of the jumps of \(f\) that occur inside \(I_j\).
Since \(f\) is a right-continuous step function, its value at any \(a\in I_j\) is obtained from its value at the left endpoint \(j/p\) by adding the jumps that occur in \((j/p,a]\).
In particular, \(|f(a)-f(j/p)|\leq V_j\) for every \(a\in I_j\), so integrating over \(I_j\), which has width \(1/p\), gives
\[
 \left|
 \int_{I_j}f(a)\,da
 -
 \frac{1}{p}f\!\left(\frac{j}{p}\right)
 \right|
 \leq
 \frac{V_j}{p}.
\]
The mesh intervals partition \([0,1)\), so each jump of \(f\) is counted in exactly one \(V_j\), meaning that \(\sum_{j}V_j=V\).
Summing the previous display over \(j\) therefore bounds the total error by \(V/p\), which proves the claim.

The grid average \eqref{eq:grid-average} is precisely the left Riemann sum of \(L_X\) on the mesh of width \(1/p\), so the claim yields
\[
 \left|
 \frac{1}{p}\sum_{j=0}^{p-1}L_X\!\left(\frac{j}{p}\right)
 -
 \int_0^1L_X(a)\,da
 \right|
 \leq
 \frac{n}{p}\sum_{x\in X}x.
\]
Finally, \(x\leq u-1\) for every \(x\in X\), so \(\sum_{x\in X}x\leq(u-1)m\).
\end{proof}

\begin{proof}[Proof of \cref{thm:stronger-connection}]
Since \(p>n^2u\), applying \cref{prop:expected-maxload} with \(m=n\) shows that the two expectations differ by at most \(n(u-1)n/p<1\) for every \(X\subseteq U\) of size \(n\).
The same is therefore true of their maxima over \(X\).
\end{proof}

\subsection{Consequence: An Alternative \texorpdfstring{\(n^{1/3+o(1)}\)}{n to the 1/3+o(1)} Upper Bound for Problem 1}
Because \cref{thm:stronger-connection} compares the two problems in both directions, it also lets us deduce upper bounds for Problem 1 from upper bounds for Problem 2.
The upper bound recorded in \cref{thm:krs-bounds} concerns the best case over all \(a\in\mathbb T\), and as stated it says nothing about a uniformly random \(a\).
The argument of Konyagin, Ruzsa, and Schlag, using estimates from work of Alon--Peres and Alon--Ruzsa, already bounds the max load \emph{averaged} over a uniformly random \(a\in\mathbb T\) \cite{konyagin-ruzsa-schlag2000,alon-peres1992uniform,alon-ruzsa1999nonaveraging}.

\begin{proposition}[\cite{konyagin-ruzsa-schlag2000,alon-peres1992uniform,alon-ruzsa1999nonaveraging}]
\label{prop:krs-average-upper}
Let \(0<n\le u\), and let \(X\subseteq U\) be a key set of size \(n\).
Then
\[
 \mathbb E_{\hreal_a\sim\mathcal H^n_{\mathbb T}}\left[\maxload_X(\hreal_a)\right]
 \leq
 n^{1/3}\exp\left(O\left(\sqrt{\log n\log\log n}\right)\right).
\]
\end{proposition}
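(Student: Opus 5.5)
The plan is to run the Konyagin--Ruzsa--Schlag argument, which rests on the Alon--Peres exponential-sum estimates, while keeping track of the average over \(a\) rather than only the best \(a\).

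\textbf{Step 1: reduce the maximum load to discrepancy in short intervals.} For \(a\in\mathbb T\) and an interval \(I\subseteq\mathbb T\) of length \(1/n\), let \(N_a(I)=|\{x\in X:\{ax\}\in I\}|\). Then \(\maxload_X(\hreal_a)\le\max_I N_a(I)\). To control \(N_a(I)\), majorize the indicator of \(I\) by a smooth trigonometric polynomial \(\phi\) (a Fej\'er- or Selberg-type kernel) of degree \(H\) with \(\hat\phi(0)\approx|I|+1/H\). This gives
\[
 N_a(I)\le n\hat\phi(0)+\sum_{0<|h|\le H}|\hat\phi(h)|\,\Bigl|\sum_{x\in X}e(hax)\Bigr|.
\]
Taking \(H\approx n\) makes the main term \(O(1)\). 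So it suffices to bound \(\mathbb E_a\sup\) of the error term. Since the error term is independent of \(I\) once absolute values are taken, it is enough to bound
\[
 \mathbb E_a\sum_{0<|h|\le H}\frac{1}{n}\Bigl|\sum_{x\in X}e(hax)\Bigr|.
\]
A naive bound via \(L^2\) gives only \(\sqrt n\) per frequency, which is too weak. This forces the more refined approach below.

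\textbf{Step 2: decompose into scales and use higher moments.} Split into scales. Write \(N_a(I)\) as a telescoping sum of dyadic-interval counts at scales \(2^{-j}\), with \(2^j\le n\). Coarse scales \(2^j\le n^{1/3}\) (up to subpolynomial factors) contribute loads close to their means. For these, bound \(\mathbb E_a\max_I|N_a(I)-n|I||\) using the Alon--Peres moment bound on \(\int\bigl|\sum_{h\le H}\sum_x e(hax)\bigr|^{2k}da\). That bound counts solutions of \(h_1x_1+\dots-h_{2k}x_{2k}=0\), and gives discrepancy about \(n^{1/2+o(1)}2^{-j/2}\)-type savings. The fine scales, where intervals of length below \(n^{-1/3}\) and above \(1/n\) are involved, are handled by the Alon--Ruzsa non-averaging estimate. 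The point there is that a large concentration of \(\{ax\}\) in a tiny interval, for many \(a\), forces additive structure in \(X\), namely many solutions to \(x_1d_1\approx x_2d_2\), and that structure is bounded.

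\textbf{Step 3: optimize the parameters.} Balance the two regimes at interval length \(n^{-1/3}\). Choose the moment \(k\approx\sqrt{\log n/\log\log n}\). The divisor-type counting of solutions then costs \(\exp(O(\sqrt{\log n\log\log n}))\), which yields the claimed bound. Because every estimate in Steps 1--2 is an integral over \(a\in\mathbb T\) (moments), they directly control \(\mathbb E_a\), not just the minimum; the Markov/union steps in KRS are replaced by taking expectations.

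\textbf{Main obstacle.} The hard step is the fine-scale regime in Step 2. The moment method alone gives only \(\sqrt n\) there, so one must faithfully import the Alon--Ruzsa combinatorial counting. I would first try to verify that their estimate is stated as an \(L^1\) or \(L^k\) average over \(a\), in which case it transfers verbatim.
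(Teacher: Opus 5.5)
Your outline does not close, and the step meant to carry the $\exp(O(\sqrt{\log n\log\log n}))$ loss fails. Bounds obtained from $L^{2k}(da)$ norms with $k\to\infty$ are dominated by rare seeds $a$ near rationals with small denominator. For $I=[0,\ell)$, every $a\in[0,\ell/u)$ places all of $X$ in $I$, so $\bigl(\int_{\mathbb T}|N_a(I)-n\ell|^{2k}\,da\bigr)^{1/(2k)}\ge n(1-\ell)(\ell/u)^{1/(2k)}$. Similarly, for an integer $h\neq0$, $\int_{\mathbb T}|\sum_{x\in X}e(hax)|^{2k}\,da$ counts solutions of $x_1+\dots+x_k=x_{k+1}+\dots+x_{2k}$ in $X$, and this count is at least $n^{2k}/(ku)$. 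With your choice $k\approx\sqrt{\log n/\log\log n}$ and $u=n^{O(1)}$ (say $u=n$), both $2k$-th roots are $n\exp(-O(\sqrt{\log n\log\log n}))=n^{1-o(1)}$. That is far above the $n^{1/2+o(1)}2^{-j/2}$ you claim.

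The ``divisor-type counting'' also does not exist for $k\ge2$. The equation $h_1x_1+\dots+h_kx_k=h_{k+1}x_{k+1}+\dots+h_{2k}x_{2k}$ is an additive problem. Only the pair equation $hx=h'x'$, which has $\lfloor H\gcd(x,x')/\max(x,x')\rfloor$ solutions with $1\le h,h'\le H$, is genuinely divisor-type. The $\exp(O(\sqrt{\log n\log\log n}))$ factor is the loss in the number-theoretic estimates, not an artifact of a moment exponent.

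Separately, even exact control of counts in intervals of length $n^{-1/3}$ says nothing about how the roughly $n^{2/3}$ points in such an interval spread over its bins of width $1/n$. So your fine-scale regime is the whole problem, and you defer it to an unspecified Alon--Ruzsa estimate. The balancing at $n^{-1/3}$ therefore does not produce the exponent $1/3$.

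The missing idea is to work with a single low moment of the bin loads. The paper lets $Z(a)$ count ordered triples of distinct keys with $\|a(x_i-x_j)\|_{\mathbb T}<1/n$ and $\|a(x_i-x_k)\|_{\mathbb T}<1/n$. A fullest bin, with $M(a)$ keys, gives $Z(a)\ge M(a)(M(a)-1)(M(a)-2)$, so $M(a)^3=O(Z(a)+1)$. The Konyagin--Ruzsa--Schlag third-moment estimate, which is where the Alon--Peres and Alon--Ruzsa bounds enter, gives $\mathbb E_a Z(a)\le n\exp(O(\sqrt{\log n\log\log n}))$. Then $\mathbb E_a M(a)\le(\mathbb E_a M(a)^3)^{1/3}$ finishes.

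The exponent $1/3$ is exactly this cube root, and the third moment is the highest one that survives the collapse event above. Since $M(a)=n$ for $a\in[0,1/(nu))$, we have $\mathbb E_a M(a)^q\ge n^{q-1}/u$. For $u=n$ this is compatible with $\mathbb E_a M(a)^q\le n^{q/3+o(1)}$ only when $q\le 3$.
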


\begin{proof}
Write \(X=\{x_0,\ldots,x_{n-1}\}\) and, for distinct \(i,j,k\in[n]\), let
\[
 Z_{ijk}(a)
 \coloneqq
 \mathbf 1\left[
   \|a(x_i-x_j)\|_{\mathbb T}<\frac{1}{n}
   \ \text{ and }\
   \|a(x_i-x_k)\|_{\mathbb T}<\frac{1}{n}
 \right],
\]
where \(\|t\|_{\mathbb T}\) denotes the distance from \(t\) to the nearest integer, and let \(Z(a)\) be the sum of \(Z_{ijk}(a)\) over all triples of distinct \(i,j,k\in[n]\).
The third-moment estimate at the heart of the Konyagin--Ruzsa--Schlag proof \cite[proof of Theorem 5.2]{konyagin-ruzsa-schlag2000}, with the estimates from Alon--Peres and Alon--Ruzsa \cite{alon-peres1992uniform,alon-ruzsa1999nonaveraging} retained rather than absorbed into an \(n^{o(1)}\) factor, bounds the expected number of such triples by
\[
 \mathbb E_a\left[Z(a)\right]
 \leq
 n\exp\left(O\left(\sqrt{\log n\log\log n}\right)\right).
\]
On the other hand, if some bin of \(\hreal_a\) contains \(L\) keys, then those keys all lie in a common interval of length \(1/n\), so each of the \(L(L-1)(L-2)\) ordered triples of distinct keys drawn from that bin contributes a term to \(Z(a)\).
Applying this to a fullest bin, and abbreviating \(M(a)\coloneqq\maxload_X(\hreal_a)\), we get \(Z(a)\geq M(a)(M(a)-1)(M(a)-2)\), and hence \(M(a)^3=O(Z(a)+1)\), where the additive constant absorbs the case \(M(a)\leq3\).
Taking expectations and applying Jensen's inequality in the form \(\mathbb E[Y]\leq(\mathbb E[Y^3])^{1/3}\) now gives
\[
 \mathbb E_a\left[\maxload_X(\hreal_a)\right]
 \leq
 \left(\mathbb E_a\left[M(a)^3\right]\right)^{1/3}
 \leq
 n^{1/3}\exp\left(O\left(\sqrt{\log n\log\log n}\right)\right).\qedhere
\]
\end{proof}

Combining \cref{prop:krs-average-upper} with \cref{thm:stronger-connection}, we are able to obtain an $n^{1/3 + o(1)}$ upper bound for Problem 1 directly from known results for Problem 2 as \Cref{cor:modular-upper-from-krs}.
However, we note that \Cref{cor:modular-upper-from-krs} is slightly weaker than Knudsen's \(O((n\log n)^{1/3})\) bound from \cref{thm:knudsen-upper-bound}, and it applies only when \(p\) is large relative to \(n\) and \(u\).

\begin{corollary}
\label{cor:modular-upper-from-krs}
Let \(0<n\le u\), and let \(p>n^2u\) be prime.
Then
\[
 \max_{X\subseteq U,\,|X|=n}
 \mathbb E_{\hmod_s\sim\mathcal H^n_p}\left[\maxload_X(\hmod_s)\right]
 \leq
 n^{1/3}\exp\left(O\left(\sqrt{\log n\log\log n}\right)\right)
 =
 n^{1/3+o(1)}.
\]
\end{corollary}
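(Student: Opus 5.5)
The bound will follow by chaining \cref{thm:stronger-connection} with \cref{prop:krs-average-upper}; no new estimates are needed. The plan has three steps.

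First, I will apply \cref{prop:krs-average-upper} to every key set \(X\subseteq U\) with \(|X|=n\). Its bound depends only on \(n\), not on \(X\) or \(u\). Taking the maximum over such \(X\) therefore gives
\[
 \max_{X\subseteq U,\,|X|=n}\mathbb E_{\hreal_a\sim\mathcal H^n_{\mathbb T}}\left[\maxload_X(\hreal_a)\right]\leq n^{1/3}\exp\left(O\left(\sqrt{\log n\log\log n}\right)\right).
\]

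Second, since \(p>n^2u\) is prime and \(0<n\le u\), the hypotheses of \cref{thm:stronger-connection} hold. That theorem says the maximum expected modular load exceeds the maximum expected real load by at most \(1\). Hence the left-hand side of the corollary is at most \(n^{1/3}\exp(O(\sqrt{\log n\log\log n}))+1\).

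Third, I absorb the additive \(1\). Since \(n\ge1\), we have \(n^{1/3}\exp(O(\sqrt{\log n\log\log n}))\ge1\), so adding \(1\) at most doubles this quantity. A factor of \(2\) is \(\exp(O(1))\), and this fits inside the \(\exp(O(\sqrt{\log n\log\log n}))\) term once the implied constant is enlarged. For very small \(n\), where \(\log\log n\) is undefined or nonpositive, the same enlargement of the constant covers the bounded range of exceptions. Finally, \(\sqrt{\log n\log\log n}=o(\log n)\), which gives the \(n^{1/3+o(1)}\) form.

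The only point needing care is this constant bookkeeping for small \(n\); the argument itself is a direct composition of the two earlier results.
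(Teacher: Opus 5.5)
Your proposal is correct and follows the paper's proof exactly: apply \cref{prop:krs-average-upper} uniformly over all \(n\)-element key sets, then use \cref{thm:stronger-connection} to pass to the modular maximum at a cost of at most \(1\). Your explicit absorption of that additive \(1\) into the \(\exp(O(\sqrt{\log n\log\log n}))\) factor for sufficiently large \(n\) is bookkeeping the paper leaves implicit.
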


\begin{proof}
By \cref{prop:krs-average-upper}, the expected real max load is at most \(n^{1/3}\exp(O(\sqrt{\log n\log\log n}))\) for every \(X\subseteq U\) of size \(n\), and by \cref{thm:stronger-connection} the corresponding modular maximum exceeds this by at most \(1\).
\end{proof}

\section{Arithmetic Kakeya and Linear Hashing}
\label{sec:hashing-implies-kakeya}

The reduction in \cref{thm:progressions-force-load} identifies arithmetic Kakeya sets as a source of pointwise lower bounds for linear hashing.
In this section we show that bounds for (real and modular) linear hashing are naturally connected to a polynomial-length version of the arithmetic Kakeya conjecture (\Cref{conj:arithmetic-kakeya}).

\begin{conjecture}[Polynomial-length arithmetic Kakeya]
\label{conj:polynomial-arithmetic-kakeya}
For positive integers \(k\) and \(D\), let \(F_k(D)\) be the minimum size of a finite set of integers containing a \(k\)-term arithmetic progression with every positive common difference \(d\in[D]\).
For every constant \(\eta>0\),
\[
 \liminf_{D\to\infty}
 \frac{\log F_{\lfloor D^\eta\rfloor}(D)}{\log D}
 \geq
 1.
\]
Equivalently, for every \(\eta,\varepsilon>0\) and all sufficiently large \(D\),
\[
 F_{\lfloor D^\eta\rfloor}(D)
 \geq
 D^{1-\varepsilon}.
\]
\end{conjecture}

Since \(F_k(D)\) is nondecreasing in \(k\), the usual arithmetic Kakeya conjecture (\Cref{conj:arithmetic-kakeya}) implies \cref{conj:polynomial-arithmetic-kakeya}.
Although this polynomial-length form is weaker than the usual arithmetic Kakeya conjecture, it is known to imply the Kakeya conjecture for upper Minkowski dimension \cite{bourgain1991remarks,bourgain1993distribution}.
The connection between arithmetic projections and the Kakeya problem was developed further by Bourgain, Katz, and Tao \cite{bourgain1999dimension,katz-tao1999arithmetic-projections,katz2002newbounds}; Green and Ruzsa give the precise progression formulation used here \cite[Section 1, Equation (1.1)]{green-ruzsa2019arithmetic}.

\subsection{The Polynomial-Length Arithmetic Kakeya Conjecture Limits Lower Bounds}
\label{sec:kakeya-barrier}

The arithmetic Kakeya sets of \cref{thm:green-ruzsa} are the source of our strongest lower bounds, so it is natural to ask how much further this approach can be pushed.
This subsection observes that \cref{conj:polynomial-arithmetic-kakeya} caps it strictly below any polynomial bound.

Any quantitative improvement to the Green--Ruzsa upper bound on \(F_k(D)\) would yield a stronger hashing lower bound through \cref{thm:progressions-force-load}.
To make this precise, given \(k\), let \(\delta_k>0\) be such that
\[
 F_k(D)\leq D^{1-\delta_k}
\]
holds for all sufficiently large \(D\).
Green--Ruzsa provides such a bound with \(\delta_k=\Omega(1/\log\log k)\).
When \(n\) is sufficiently large, setting \(D=kn\) gives a set \(A\) of at most \((kn)^{1-\delta_k}\) integers that contains a \(k\)-term arithmetic progression with every positive common difference \(d\in[kn]\).
For sufficiently large \(n\) it holds that \((kn)^{1-\delta_k}\leq n\), or equivalently that \( k^{1-\delta_k} \leq n^{\delta_k}\).
Therefore, the set \(A\) has at most \(n\) elements, and \cref{thm:progressions-force-load} forces a max load of at least \(k/2\) on every \(n\)-element set of keys containing \(A\).
A larger \(\delta_k\) therefore admits a larger progression length \(k\), and hence a stronger lower bound.
\Cref{conj:polynomial-arithmetic-kakeya} says that if \(k\) is polynomial in \(n\), then \(\delta_k\) tends to 0.
This would rule out reaching a polynomial lower bound through this specific type of construction.

\begin{remark}
\label{prop:arithmetic-kakeya-barrier}
Assume \Cref{conj:polynomial-arithmetic-kakeya}.
The arithmetic-Kakeya-set construction in \cref{thm:progressions-force-load} cannot prove an \(n^{\Omega(1)}\) maximum-load lower bound for all sufficiently large \(n\).
\end{remark}


\subsection{Subpolynomial Linear Hashing Upper Bounds Imply Polynomial-Length Arithmetic Kakeya}

The best known upper bounds for modular and real linear hashing are polynomial in the number of keys \cite{konyagin-ruzsa-schlag2000,knudsen2016linear,knudsen2019linear}.
We show that improving the upper bound to \(n^{o(1)}\)---even for a single seed---would establish \cref{conj:polynomial-arithmetic-kakeya}.

Before proving the connection to linear hashing, we first show that the location of a progression-rich set can be normalized without losing its progressions.

\begin{lemma}
\label{lem:pack-arithmetic-kakeya-set}
Suppose that a finite set \(A\subseteq\mathbb Z\) contains a \(k\)-term arithmetic progression with every positive common difference \(d\in[D]\).
Then there is a set \(A'\subseteq[D|A|]\) with \(|A'|=|A|\) having the same property.
\end{lemma}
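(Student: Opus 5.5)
The idea is to compress the large gaps of \(A\) without touching any of its arithmetic progressions. The key observation is that a progression with common difference \(d\leq D\) never jumps over a gap of length larger than \(D\) between consecutive elements of \(A\).

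First, write the elements of \(A\) in increasing order as \(a_0<a_1<\cdots<a_{m-1}\), where \(m=|A|\), and set \(g_i\coloneqq a_i-a_{i-1}\geq 1\) for \(1\leq i<m\). Call a gap \(g_i\) \emph{large} if \(g_i>D\). Cutting the sorted list at every large gap splits \(A\) into blocks of consecutive elements. Inside a block, all consecutive gaps are at most \(D\).

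Second, check that every progression \(x,x+d,\ldots,x+(k-1)d\subseteq A\) with \(d\in[D]\) lies inside a single block. Consider two consecutive terms \(x+jd\) and \(x+(j+1)d\). Both lie in \(A\), so every consecutive gap of \(A\) between them is at most \(d\leq D\). Hence no large gap separates them, and consecutive terms lie in the same block. By induction, the whole progression lies in one block.

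Third, define the compressed set. Let \(b_0\coloneqq 0\) and \(b_i\coloneqq b_{i-1}+\min(g_i,D)\), and put \(A'\coloneqq\{b_0,\ldots,b_{m-1}\}\).
\begin{itemize}
\item Each increment is at least \(1\), so the \(b_i\) are strictly increasing and \(|A'|=m\).
\item Each increment is at most \(D\), so \(b_{m-1}\leq (m-1)D<Dm\), which gives \(A'\subseteq[D|A|]\).
\item Within a block, all gaps are unchanged. So the map \(a_i\mapsto b_i\) restricted to a block is a translation, by a constant depending on the block.
\end{itemize}
Since a translation carries arithmetic progressions to arithmetic progressions with the same difference, every progression found in the second step maps to a \(k\)-term progression in \(A'\) with the same difference \(d\). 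Therefore \(A'\) contains a \(k\)-term progression with every positive common difference \(d\in[D]\).

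The only point needing care is the second step, namely that progressions cannot straddle a large gap. It also covers the degenerate case \(k=1\), where there is nothing to check.
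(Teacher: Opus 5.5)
Your proof is correct, but it decomposes \(A\) differently from the paper. The paper's route is as follows:
\begin{itemize}
\item It fixes one progression for each difference \(d\) and joins consecutive terms of these chosen progressions by edges of length \(d<D\).
\item It bounds the span of each connected component \(C\) by \((D-1)(|C|-1)<D|C|\), using a path of edges.
\item It moves the components rigidly into disjoint consecutive intervals of length \(D|C|\). Components can interleave on the line, so this step may reorder them; that is why the span bound and separate intervals are needed.
\end{itemize}
You instead cut the sorted list at gaps exceeding \(D\) and shrink those gaps to \(D\). This gives an explicit order-preserving map that is a translation on each block.

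Your blocks are unions of the paper's components, since an edge of length \(d\le D\) cannot cross a gap larger than \(D\). So the underlying idea, moving pieces rigidly, is the same. What your version buys is that it needs no choice of progressions. It also preserves every distance of at most \(D\) between elements of \(A\), and hence every progression in \(A\) with difference at most \(D\), not just the chosen ones. The paper's graph argument, by contrast, does not use the linear order of \(\mathbb Z\), so it adapts more readily to settings where ``gaps between consecutive elements'' is not meaningful. Both arguments land in the same container \([D|A|]\).
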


\begin{proof}
Choose one such progression for each difference \(d\) (truncating to exactly \(k\) terms if necessary) and form a weighted graph on \(A\) by adding an edge with length \(d\) between consecutive terms with difference \(d\).
Every edge has length less than \(D\), so a connected component \(C\) has diameter at most \((D-1)(|C|-1)<D|C|\).
Translate the connected components independently into disjoint consecutive intervals, assigning an interval of length \(D|C|\) to each component.
Every chosen progression lies in one component and is preserved by its translation, while the total length of the intervals is \(D|A|\).
\end{proof}

We now show that even a weak subpolynomial upper bound on the maximum load of either real or modular hashing is enough to imply \cref{conj:polynomial-arithmetic-kakeya}.
In particular, it is sufficient if for every set there exists even a single seed with subpolynomial max load, which we refer to as a \defn{single-seed} upper bound.

\begin{theorem}
\label{thm:subpolynomial-hashing-implies-kakeya}
Let \(H(n)=n^{o(1)}\).
Suppose that, for all sufficiently large \(n\), all \(u > n\) with \(u=n^{O(1)}\), and all \(n\)-element key sets \(X\subseteq[u]\),
\[
 \min_{a\in\mathbb T}\maxload_X(\hreal_a)
 \leq
 H(n).
\]
Then \cref{conj:polynomial-arithmetic-kakeya} holds.
\end{theorem}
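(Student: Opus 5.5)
We argue by contradiction, combining \cref{lem:pack-arithmetic-kakeya-set} with \cref{thm:progressions-force-load}. Suppose \cref{conj:polynomial-arithmetic-kakeya} fails. Then there are constants \(\eta,\varepsilon>0\) and infinitely many \(D\) with
\[
F_{\lfloor D^\eta\rfloor}(D)<D^{1-\varepsilon}.
\]
Fix such a \(D\) and write \(k=\lfloor D^\eta\rfloor\). Let \(A\) be a set of fewer than \(D^{1-\varepsilon}\) integers containing a \(k\)-term progression with every difference in \([D]\). By \cref{lem:pack-arithmetic-kakeya-set} we may assume \(A\subseteq[D|A|]\subseteq[D^{2}]\).

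The plan is to pick \(n\) so that \(A\) fits into an \(n\)-element key set and the hypothesis \(D\ge kn\) of \cref{thm:progressions-force-load} holds. Set
\[
n\coloneqq\lfloor D/k\rfloor\approx D^{1-\eta}.
\]
We need \(|A|\le n\), that is, \(D^{1-\varepsilon}\lesssim D^{1-\eta}\). This requires \(\eta<\varepsilon\), which need not hold.

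To fix this, shrink \(\eta\) using monotonicity of \(F_k\) in \(k\): a set with \(k\)-term progressions also has \(k'\)-term progressions for every \(k'\le k\). So we replace \(k\) by
\[
k'\coloneqq\lfloor D^{\eta'}\rfloor,\qquad \eta'\coloneqq\min(\eta,\varepsilon/2).
\]
The same \(A\) still works. Now take \(n\coloneqq\lfloor D/k'\rfloor\ge D^{1-\eta'}/2\). For large \(D\) this gives \(|A|<D^{1-\varepsilon}\le n\), and also \(D\ge k'n\).

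Next, pad \(A\) to an \(n\)-element set \(X\subseteq[u]\), with
\[
u\coloneqq\max(D^2,2n)=n^{O(1)}\quad\text{and}\quad u>n.
\]
This is possible since \(D\le (2n)^{1/(1-\eta')}\), which is polynomial in \(n\). Applying \cref{thm:progressions-force-load} gives
\[
\maxload_X(\hreal_a)\ge k'/2\quad\text{for every }a\in\mathbb T.
\]
Since \(n\le D\), we have \(k'/2\ge D^{\eta'}/4\ge n^{\eta'}/4\).

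As \(D\) ranges over the infinitely many bad values, \(n\to\infty\). So for arbitrarily large \(n\) we obtain
\[
\min_a\maxload_X(\hreal_a)\ge n^{\eta'}/4>H(n)=n^{o(1)},
\]
which contradicts the hypothesis.

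The main obstacle is the parameter bookkeeping. We must keep \(|A|\le n\) (handled by the reduction \(\eta'\le\varepsilon/2\)), keep \(u\) polynomial in \(n\) (via \cref{lem:pack-arithmetic-kakeya-set}), and handle the floors, for instance ensuring \(k'\ge 2\) and that the distinct bad \(D\) give unboundedly many \(n\). All of these are routine for large \(D\).
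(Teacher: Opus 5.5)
Your proposal is correct and follows the paper's proof essentially step for step. You negate the conjecture, shrink the progression length to $\lfloor D^{\eta'}\rfloor$ via monotonicity so that $|A|\le n=\lfloor D/k'\rfloor$, normalize $A$ with \cref{lem:pack-arithmetic-kakeya-set} to keep $u$ polynomial in $n$, pad, and apply \cref{thm:progressions-force-load}; the paper does the same with $\delta=\tfrac12\min\{\eta,\varepsilon,1\}$ where you use $\eta'=\min(\eta,\varepsilon/2)$. The one unstated detail, $\eta'<1$ (needed for $u=n^{O(1)}$), holds automatically because $F_k(D)\ge1$ forces $\varepsilon<1$; the paper secures the same thing by putting $1$ in its minimum.
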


\begin{proof}
Fix \(\eta>0\), and suppose for the sake of contradiction that \cref{conj:polynomial-arithmetic-kakeya} fails for this \(\eta\).
Negating the equivalent formulation in \cref{conj:polynomial-arithmetic-kakeya}, there is a constant \(\varepsilon>0\) such that, for arbitrarily large integers \(D\),
\[
 F_{\lfloor D^\eta\rfloor}(D)
 \leq
 D^{1-\varepsilon}.
\]
Set
\[
 \delta
 \coloneqq
 \frac{1}{2}\min\{\eta,\varepsilon,1\},
 \qquad
 k
 \coloneqq
 \lfloor D^\delta\rfloor,
 \qquad
 n
 \coloneqq
 \left\lfloor\frac{D}{k}\right\rfloor.
\]
For all sufficiently large such \(D\), we have \(2\leq k\leq \lfloor D^\eta\rfloor\),
\[
 F_k(D)
 \leq
 F_{\lfloor D^\eta\rfloor}(D)
 \leq
 D^{1-\varepsilon}
 =
 o(n),
\]
and \(D\geq kn\).

By the definition of \(F_k(D)\), there is a finite set \(A\subseteq\mathbb Z\) with \(|A|\leq D^{1-\varepsilon}\) that contains a \(k\)-term arithmetic progression with every positive common difference \(d\in[D]\).
\Cref{lem:pack-arithmetic-kakeya-set} lets us assume \(A\subseteq[D|A|]\).
Pad \(A\) to an \(n\)-element set \(X\subseteq[u]\), where \(u\coloneqq D|A|+n+1>n\).
Since \(n=\Theta(D^{1-\delta})\),
\[
 u
 \leq
 D^{2-\varepsilon}+n+1
 =
 n^{O(1)}.
\]
Applying \cref{thm:progressions-force-load} gives
\[
 \maxload_X(\hreal_a)
 \geq
 \frac{k}{2}
\]
for every \(a\in\mathbb T\).

Moreover,
\[
 \frac{\log k}{\log n}
 \longrightarrow
 \frac{\delta}{1-\delta}
 >
 0,
\]
so \(k/2=n^{\Omega(1)}\), contradicting the assumed minimum maximum-load bound \(H(n)=n^{o(1)}\).
\end{proof}

Because an expected upper bound implies a single-seed upper bound, we can immediately conclude that either type of upper bound for real hashing implies \cref{conj:polynomial-arithmetic-kakeya}.
Similarly, using \cref{thm:problem2-to-problem1}, we conclude the same for modular linear hashing.

\begin{corollary}
\label{cor:expected-modular-hashing-implies-kakeya}
Let \(H(n)=n^{o(1)}\).
Suppose that there is a uniform single-seed or expected max-load upper bound of \(H(n)\) for either real linear hashing on universes \([u]\) with \(u=n^{O(1)}\), or modular linear hashing with \(u<p\) and \(u,p=n^{O(1)}\).
Then \cref{conj:polynomial-arithmetic-kakeya} holds.
\end{corollary}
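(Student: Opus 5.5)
The proof is a chain of reductions to \cref{thm:subpolynomial-hashing-implies-kakeya}. Its hypothesis asks for a uniform single-seed bound for real linear hashing: for all large \(n\), all \(u>n\) with \(u=n^{O(1)}\), and all \(n\)-element \(X\subseteq[u]\), we need \(\min_a\maxload_X(\hreal_a)\le H(n)\). So in each of the four cases (real or modular hashing, single-seed or expected bound) it is enough to derive this single-seed real bound, possibly with \(H\) replaced by another \(n^{o(1)}\) function.

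\emph{Real case.} For real linear hashing, an expected bound gives a single-seed bound at once: some \(a\) has \(\maxload_X(\hreal_a)\le\mathbb E_a[\maxload_X(\hreal_a)]\le H(n)\). So in the real case the hypothesis of \cref{thm:subpolynomial-hashing-implies-kakeya} holds directly.

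\emph{Modular case.} Here the hypothesis presumably reads: for all large \(n\), all \(u=n^{O(1)}\), some (or every) prime \(p>u\) with \(p=n^{O(1)}\), and all \(X\subseteq[u]\) of size \(n\), we have \(\min_s\maxload_X(\hmod_s)\le H(n)\), or the corresponding expectation is at most \(H(n)\). As in the real case, an expected bound implies a single-seed bound, since some seed lies at or below the mean.

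Given \(u\), we choose a prime \(p\) with \(u<p\le 2u\) by Bertrand's postulate, so \(p=n^{O(1)}\). For any \(s\), the proof of \cref{thm:problem2-to-problem1} shows that \(\maxload_X(\hmod_s)=\maxload_X(\hreal_{a_s})\). In fact, \cref{lem:thm2to1-helper} gives that \(\hreal_{a_s}\) equals \(\hmod_s\) up to the bin permutation \(j\mapsto\imod{cj}{n}\). Hence
\[
 \min_a\maxload_X(\hreal_a)\le\min_s\maxload_X(\hmod_s)\le H(n).
\]
This is exactly the hypothesis of \cref{thm:subpolynomial-hashing-implies-kakeya}, which then yields \cref{conj:polynomial-arithmetic-kakeya}.

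\emph{Main obstacle.} The only delicate point is bookkeeping, namely that the quantifiers over \(u\) and \(p\) match. The proof of \cref{thm:subpolynomial-hashing-implies-kakeya} only invokes the hypothesis on the single set \(X\) it constructs, with \(u=n^{O(1)}\). So it suffices that the modular bound holds for at least one prime \(p\in(u,n^{O(1)}]\), and Bertrand's postulate supplies one. The hypothesis of \cref{thm:problem2-to-problem1} also needs \(n\le u<p\), which holds because \(u>n\).
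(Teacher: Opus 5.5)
Your proposal is correct and matches the paper's argument: an expected bound gives a single-seed bound, and \cref{thm:problem2-to-problem1} (via the bin permutation of \cref{lem:thm2to1-helper}) turns a modular single-seed bound into a real one, so \cref{thm:subpolynomial-hashing-implies-kakeya} applies. Your use of Bertrand's postulate to get a prime \(p\in(u,2u]\) with \(p=n^{O(1)}\) just makes explicit a quantifier detail that the paper leaves implicit.
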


Using an arithmetic-projection argument, Bourgain showed that \Cref{conj:polynomial-arithmetic-kakeya} implies the Kakeya conjecture for upper Minkowski dimension \cite{bourgain1991remarks,bourgain1993distribution}; see also the discussion of Equation (1.1) by Green and Ruzsa \cite[Section 1]{green-ruzsa2019arithmetic} and the subsequent projection bounds of Katz and Tao \cite{katz-tao1999arithmetic-projections,katz2002newbounds}.
A Besicovitch set in \(\mathbb R^d\) is a compact measure-zero set containing a unit line segment in every direction.
For a bounded set \(E\subseteq\mathbb R^d\), if \(N(E,r)\) is the minimum number of radius-\(r\) balls needed to cover \(E\), then its upper Minkowski dimension is
\[
 \overline{\dim}_{\mathrm M}(E)
 \coloneqq
 \limsup_{r\to0}
 \frac{\log N(E,r)}{\log(1/r)}.
\]
The Kakeya conjecture for upper Minkowski dimension states that every Besicovitch set in \(\mathbb{R}^d\) has upper Minkowski dimension \(d\) for every \(d \ge 2\).

\begin{corollary}
\label{cor:subpolynomial-hashing-implies-upper-minkowski-kakeya}
The existence of a uniform single-seed or expected max-load upper bound of \(n^{o(1)}\) for either real linear hashing on universes \([u]\) with \(u=n^{O(1)}\), or modular linear hashing with \(u<p\) and \(u,p=n^{O(1)}\) implies that every Besicovitch set in \(\mathbb R^d\) has upper Minkowski dimension \(d\), for every \(d\geq2\).
\end{corollary}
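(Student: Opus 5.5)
The plan is to chain \cref{cor:expected-modular-hashing-implies-kakeya} with Bourgain's arithmetic-projection theorem. The corollary is a formal consequence of two implications already recorded in the paper, so the work is mostly in checking that the hypotheses line up.

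First, assume a uniform single-seed or expected max-load upper bound of \(H(n)=n^{o(1)}\) for real linear hashing on universes \([u]\) with \(u=n^{O(1)}\), or for modular linear hashing with \(u<p\) and \(u,p=n^{O(1)}\). I will apply \cref{cor:expected-modular-hashing-implies-kakeya} to obtain \cref{conj:polynomial-arithmetic-kakeya}. Concretely, I reduce every case to the hypothesis of \cref{thm:subpolynomial-hashing-implies-kakeya}.
\begin{itemize}
\item An expected bound gives a single-seed bound, since the minimum over seeds is at most the average.
\item A modular single-seed bound gives a real single-seed bound. By \cref{thm:problem2-to-problem1}, \(\min_a \maxload_X(\hreal_a)\le \min_s \maxload_X(\hmod_s)\) for every prime \(p>u\).
\end{itemize}
For the modular case one must supply, for each \(u=n^{O(1)}\), a prime \(p\) with \(u<p\) and \(p=n^{O(1)}\). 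Bertrand's postulate gives one in \((u,2u]\). Then \cref{thm:subpolynomial-hashing-implies-kakeya} yields \cref{conj:polynomial-arithmetic-kakeya}.

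Second, invoke Bourgain's theorem \cite{bourgain1991remarks,bourgain1993distribution}, in the formulation recorded by Green and Ruzsa \cite[Section 1]{green-ruzsa2019arithmetic}. It says that the statement \(F_{\lfloor D^\eta\rfloor}(D)\ge D^{1-\varepsilon}\), for all fixed \(\eta,\varepsilon>0\) and large \(D\), implies that every Besicovitch set in \(\mathbb R^d\) has upper Minkowski dimension \(d\) for every \(d\ge 2\). This is exactly \cref{conj:polynomial-arithmetic-kakeya}, so the conclusion follows.

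The main obstacle is not technical but bibliographic: I need to confirm that the cited projection argument really uses only polynomial-length progressions \(k=D^{\eta}\) and targets upper Minkowski dimension rather than Hausdorff dimension. If a self-contained sketch were required, I would do the following.
\begin{itemize}
\item Discretize a Besicovitch set at scale \(\delta\) into a union of \(\delta\)-tubes.
\item Slice it by \(\delta^{-\eta}\) parallel hyperplanes, so that each tube meets them in an arithmetic progression of length \(\delta^{-\eta}\).
\item Project along a generic direction onto \(\mathbb Z\) using a base-\(N\) digit encoding.
\item Read off the covering number \(N(E,\delta)\) from a lower bound on \(F_{\lfloor D^\eta\rfloor}(D)\).
\end{itemize}
I would defer to the cited sources for this step.
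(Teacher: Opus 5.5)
Your proposal is correct and follows the paper's route. You reduce expected bounds to single-seed bounds and modular bounds to real bounds via \cref{thm:problem2-to-problem1}, apply \cref{thm:subpolynomial-hashing-implies-kakeya} (packaged in the paper as \cref{cor:expected-modular-hashing-implies-kakeya}) to obtain \cref{conj:polynomial-arithmetic-kakeya}, and then cite Bourgain's projection argument, which the paper also does not reprove. Your use of Bertrand's postulate to produce a prime \(p\in(u,2u]\) with \(p=n^{O(1)}\) makes explicit a step the paper leaves implicit.
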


\paragraph{Why the full arithmetic Kakeya conjecture does not follow.}
The usual arithmetic Kakeya conjecture first fixes the progression length \(k\) and then sends \(D\) to infinity, whereas \cref{conj:polynomial-arithmetic-kakeya} couples the two parameters by taking \(k=\lfloor D^\eta\rfloor\).
Indeed, suppose for some fixed \(k\) and arbitrarily large \(D\) that \(F_k(D)\leq D^{1-\varepsilon}\).
Repeating the proof of \cref{thm:subpolynomial-hashing-implies-kakeya} with \(n\coloneqq\lfloor D/k\rfloor\) would produce an \(n\)-element key set whose maximum load is at least \(k/2\) for every seed.
Since \(n=\Theta(D)\) while \(k/2\) remains constant, this contradicts a hashing upper bound \(H(n)\) only if \(H(n)<k/2\) for arbitrarily large \(n\).
Thus, through the present reduction, hashing upper bounds can imply \Cref{conj:polynomial-arithmetic-kakeya} and hence the Kakeya conjecture for upper Minkowski dimension, but not the full arithmetic Kakeya conjecture.

\section{Acknowledgments}

This work was supported in part by NSF grants CCF-2542165, CNS-2504471, and CNS-2504470, and by a Jane Street grant. This work was done in part while Hanna Koml\'os was visiting the Simons Institute for the Theory of Computing, and was supported in part by Google Research, Fall 2025.

Opus 4.8, Opus 5, GPT 5.5, and GPT 5.6 were all used to edit and create initial drafts of portions of this paper, which were then heavily edited by the authors. Additionally, the proof of \Cref{thm:progressions-force-load} was developed in collaboration with GPT 5.6. Finally, GPT 5.6 was used to find papers on Problems 2 and 3.

\printbibliography

\appendix
\crefalias{section}{appendix}
\section{The Green--Ruzsa Construction}
\label{app:arithmetic-kakeya}

This appendix supplies the proof of \cref{thm:green-ruzsa}, the arithmetic Kakeya construction on which the lower bounds of \cref{cor:problem2-lower-bound,cor:problem1-lower-bound} rest.
The construction itself is due to Green and Ruzsa \cite{green-ruzsa2019arithmetic}; what we must supply is a version of their result whose parameters are tuned to a prescribed number of keys \(n\).

Green and Ruzsa prove
\[
 \lim_{D\to\infty}
 \frac{\log F_k(D)}{\log D}
 \leq
 1-\frac{c}{\log\log k}
\]
for an absolute constant \(c>0\) \cite[Theorem 1.2]{green-ruzsa2019arithmetic}.
This asymptotic statement alone is not sufficient for our application because its threshold in \(D\) need not be uniform in \(k\).
Instead, we extract the following explicit estimate from their proof, retaining the number of prime factors as a free parameter.

\begin{lemma}[Green--Ruzsa {\cite[Section 5]{green-ruzsa2019arithmetic}}]
\label{lem:green-ruzsa-construction}
Let \(p_1,\ldots,p_m\) be distinct odd primes, let \(D\coloneqq\prod_{i=1}^m p_i\), and let \(k\geq2\).
There is a set \(S\subseteq[kD]\) that contains a \(k\)-term arithmetic progression with every positive common difference \(d\in[D]\) and satisfies
\[
 |S|
 \leq
 k^2 2^{-m}D
 \prod_{i=1}^m\left(1+\frac{1}{p_i}\right).
\]
\end{lemma}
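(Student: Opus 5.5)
The construction comes from a quadratic choice of starting points. For each difference \(d\), the base point \(x_d\) is chosen so that, modulo every \(p_i\), it is a fixed nonzero multiple of \(d^2\). Completing the square then confines every term of every progression, modulo each \(p_i\), to a set of only \((p_i+1)/2\) residues. This set depends on the position \(j\) in the progression but not on \(d\). Counting with the Chinese remainder theorem and taking a union over the \(k\) positions gives the bound.

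\emph{Step 1: the progressions.} Fix any nonzero residues \(c_i\in\mathbb Z_{p_i}\), say \(c_i=1\). For each \(d\in[D]\), use the Chinese remainder theorem to pick the unique \(x_d\in[D]\) with \(x_d\equiv c_i d^2\pmod{p_i}\) for every \(i\). Take the progression \(x_d,x_d+d,\ldots,x_d+(k-1)d\). Since \(x_d\le D-1\) and \(d\le D-1\), every term is at most \((D-1)+(k-1)(D-1)<kD\), so it lies in \([kD]\). Let \(S\) be the union of these progressions over \(d\in[D]\). Then \(S\) contains a \(k\)-term progression with every common difference \(d\in[D]\). The difference \(d=0\) gives a constant progression, which is harmless and can be dropped.

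\emph{Step 2: residue confinement.} Fix \(j\in[k]\) and write \(y=x_d+jd\). Since \(p_i\) is odd and \(c_i\ne0\), the element \(4c_i\) is invertible modulo \(p_i\), and
\[
 y\equiv c_i d^2+jd\equiv c_i\bigl(d+j(2c_i)^{-1}\bigr)^2-j^2(4c_i)^{-1}\pmod{p_i}.
\]
Hence \(y \bmod p_i\) lies in
\[
 Q_{i,j}\coloneqq c_i\cdot\{w^2:w\in\mathbb Z_{p_i}\}-j^2(4c_i)^{-1}.
\]
This is a translate of a dilate of the squares together with \(0\), so \(|Q_{i,j}|=(p_i+1)/2\), and \(Q_{i,j}\) does not depend on \(d\). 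So the set \(S_j\) of \(j\)-th terms is contained in
\[
 T_j\coloneqq\bigl\{y\in[kD]: y\bmod p_i\in Q_{i,j}\text{ for all }i\bigr\}.
\]

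\emph{Step 3: counting.} By the Chinese remainder theorem, the residues mod \(D\) allowed in \(T_j\) number \(\prod_i (p_i+1)/2\). The interval \([kD]\) consists of exactly \(k\) full periods of length \(D\), so
\[
 |T_j|=k\prod_{i=1}^m\frac{p_i+1}{2}=k\,2^{-m}D\prod_{i=1}^m\Bigl(1+\frac1{p_i}\Bigr).
\]
Since \(S=\bigcup_{j\in[k]}S_j\subseteq\bigcup_{j\in[k]}T_j\), summing over the \(k\) values of \(j\) gives the claimed bound \(|S|\le k^2 2^{-m}D\prod_i(1+1/p_i)\).

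The main obstacle is Step 1. One has to see that the starting points must depend on \(d\) quadratically: any choice linear in \(d\) lets the residues of \(y\) sweep all of \(\mathbb Z_{p_i}\) as \(d\) varies. With \(x_d\) quadratic in \(d\), the confinement in Step 2 and the count in Step 3 are routine.
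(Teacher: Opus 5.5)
Your proposal is correct and is essentially the paper's proof. The paper also takes $x_d\equiv d^2 \pmod D$ (your $c_i=1$), completes the square as $4(x_d+jd)\equiv(2d+j)^2-j^2 \pmod{p_i}$ to confine each position-$j$ term to $(p_i+1)/2$ residues, and then counts via the Chinese remainder theorem with at most $k$ representatives per residue class in $[kD]$ and a union over the $k$ positions.
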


\begin{proof}
For each positive \(d\in[D]\), let \(x_d\in[D]\) be the unique integer satisfying \(x_d\equiv d^2\pmod D\), and set
\[
 S\coloneqq\{x_d+jd:d\in[D]\setminus\{0\},\ j\in[k]\}.
\]
For each \(d\), the corresponding \(k\) elements form an arithmetic progression with common difference \(d\), and they lie in \([kD]\) because \(0\leq x_d+jd<kD\).

Fix \(j\in[k]\).
For every \(i\),
\[
 4(x_d+jd)
 \equiv
 4(d^2+jd)
 =
 (2d+j)^2-j^2
 \pmod{p_i}.
\]
As a function of \(d\in\mathbb Z_{p_i}\), the right-hand side takes exactly \((p_i+1)/2\) values.
The Chinese remainder theorem therefore shows that \(x_d+jd\bmod D\) takes at most
\[
 \prod_{i=1}^m\frac{p_i+1}{2}
 =
 2^{-m}D\prod_{i=1}^m\left(1+\frac{1}{p_i}\right)
\]
values.
Because \(0\leq x_d+jd<kD\), each residue class modulo \(D\) has at most \(k\) representatives among the possible integer values of \(x_d+jd\).
Thus, for each fixed \(j\),
\[
 \bigl|\{x_d+jd:d\in[D]\setminus\{0\}\}\bigr|
 \leq
 k 2^{-m}D\prod_{i=1}^m\left(1+\frac{1}{p_i}\right).
\]
Summing this bound over the \(k\) choices of \(j\) proves the result.
\end{proof}

We tune the number of prime factors in the Green--Ruzsa construction to the desired number of keys.
This avoids the loss incurred by first fixing the progression length and then tensoring the construction to an arbitrary scale.
We now restate \cref{thm:green-ruzsa}; beyond what the statement requires, the proof below also tracks the modulus \(D\) and an upper bound on \(K\).

\greenruzsatheorem*

\begin{proof}
Let \(3=p_1<p_2<\cdots\) be the odd primes, and write
\[
 D_m\coloneqq\prod_{i=1}^m p_i,
 \qquad
 K_m\coloneqq\left\lfloor\left(\frac{3}{2}\right)^{m/5}\right\rfloor.
\]
Choose \(m\) minimally so that \(D_m>nK_m\), and abbreviate \(D\coloneqq D_m\) and \(K\coloneqq K_m\).
Such an \(m\) exists and tends to infinity with \(n\).
By minimality,
\[
 D
 =
 p_mD_{m-1}
 \leq
 p_mnK_{m-1}
 \leq
 p_mnK.
\]

Apply \cref{lem:green-ruzsa-construction} with these parameters.
Since every \(p_i\geq3\),
\[
 \prod_{i=1}^m\left(1+\frac{1}{p_i}\right)
 \leq
 \left(\frac{4}{3}\right)^m.
\]
The resulting set \(S\) therefore satisfies
\[
 \begin{aligned}
 |S|
 &\leq
 K^2 2^{-m}D
 \prod_{i=1}^m\left(1+\frac{1}{p_i}\right)\\
 &\leq
 p_m n K^3\left(\frac{2}{3}\right)^m\\
 &\leq
 p_m n\left(\frac{2}{3}\right)^{2m/5}\\
 &\leq n,
 \end{aligned}
\]
where the last inequality holds for all sufficiently large \(m\), since \(p_m\) grows polynomially in \(m\).
Extend \(S\) arbitrarily to an \(n\)-element set \(X\subseteq[KD]\).
This is possible because \(KD\geq K^2n\geq n\).
Every progression contained in \(S\) remains in \(X\), so \(X\) has the required progression for every positive difference \(d\in[D]\).

It remains to express \(K\) and \(D\) in terms of \(n\).
The prime number theorem gives
\[
 \log D_m=\Theta(m\log m)
 \qquad\text{and}\qquad
 p_m=O(m\log m).
\]
The inequality \(n\leq D_m\) gives \(\log n=O(m\log m)\).
Conversely, \(D_{m-1}\leq nK_{m-1}\) and \(\log K_{m-1}=O(m)\) give \(\log n=\Omega(m\log m)\).
Thus \(\log n=\Theta(m\log m)\), and hence
\[
 m
 =
 \Theta\left(\frac{\log n}{\log\log n}\right).
\]
Since \(\log K=\Theta(m)\), adjusting the absolute constants gives
\[
 \exp\left(c\frac{\log n}{\log\log n}\right)
 \leq K
 \leq
 \exp\left(C\frac{\log n}{\log\log n}\right).
\]
Finally, \(p_m=O(m\log m)=(\log n)^{O(1)}\), and the choice of \(m\) gives
\[
 nK
 < D
 \leq
 p_mnK
 \leq
 nK(\log n)^C
\]
after increasing \(C\) if necessary.
Because \(D>nK\), the progressions in \(X\) include every positive integer common difference \(d\in[nK]\).
Moreover,
\[
 KD
 \leq
 nK^2(\log n)^C
 =
 n\exp\left(O\left(\frac{\log n}{\log\log n}\right)\right)
 =
 n^{1+o(1)}.
\]
Set \(u\coloneqq KD=n^{1+o(1)}\).
Then \(X\subseteq[u]\), as claimed.
\end{proof}

\end{document}